\theoremstyle{thmstyleone}%
\newtheorem{theorem}{Theorem}
\newtheorem{lemma}{Lemma}%
\newtheorem{proposition}[theorem]{Proposition}%
\newtheorem{notation}{Notation}%
\theoremstyle{thmstyletwo}%
\newtheorem{example}{Example}%
\newtheorem{remark}{Remark}%
\theoremstyle{thmstylethree}%
\newtheorem{definition}{Definition}%
\begin{document}

\title[Minimum distances of binary optimal LCD codes of dimension five are completely determined]{Minimum distances of binary optimal LCD codes of dimension five are completely determined}


\author*[1,2]{\fnm{Yang} \sur{Liu}}\email{liu\_yang10@163.com}

\author[1]{\fnm{Ruihu} \sur{Li}}\email{liruihu@aliyun.com}
\equalcont{These authors contributed equally to this work.}

\author[1]{\fnm{Qiang} \sur{Fu}}\email{fuqiangkgd@163.com}
\equalcont{These authors contributed equally to this work.}

\author[1]{\fnm{Hao} \sur{Song}}\email{song\_hao@163.com}
\equalcont{These authors contributed equally to this work.}

\affil*[1]{\orgdiv{Department of Basic Sciences}, \orgname{Air Force
Engineering  University}, \orgaddress{ \city{Xi'an},
\postcode{710051}, \state{Shannxi}, \country{China}}}

\affil[2]{\orgdiv{ Air Defense and antimissile school}, \orgname{Air
Force Engineering  University}, \orgaddress{ \city{Xi'an},
\postcode{710051}, \state{Shannxi}, \country{China}}}


\abstract{Let $d_{a}(n,5)$ and $d_{l}(n,5)$ be the minimum weights
of binary [n,5] optimal linear codes and linear complementary dual
(LCD) codes, respectively. This article aims to investigate
$d_{l}(n,5)$ of some families of binary [n,5] LCD  codes when
$n=31s+t\geq 14$ with $s$ an integer and $t \in
\{2,8,10,12,14,16,18\}$. By determining the defining vectors of
optimal linear codes and discussing their reduced codes, we classify
optimal linear codes and calculate their hull dimensions. Thus, the
non-existence of these classes  of binary $[n,5,d_{a}(n,5)]$ LCD
codes are  verified and we further derive that
$d_{l}(n,5)=d_{a}(n,5)-1$ for $t\neq 16$ and
$d_{l}(n,5)=16s+6=d_{a}(n,5)-2$ for $t=16$. Combining with known
results on optimal LCD code,  $d_{l}(n,5)$ of all $[n,5]$ LCD codes
are completely determined.}

\keywords{optimal code, LCD code, hull dimension, defining vector,
reduced code}



\maketitle

\section{Introduction}

Let $F_{2}^{n}$ be the $n$-dimensional row vector space over binary
field $F_{2}$. A binary linear $[n,k]$ code is a $k$-dimensional
subspace of $F_{2}^{n}$. The weight $w(x)$ of a vector $x\in
F_{2}^{n}$ is the number of its nonzero coordinates. If the minimum
weight of nonzero  vectors in $\mathcal{C}$$=[n,k]$ is $d$, then $d$
is called the minimum distance of $\mathcal{C}$ and the code
$\mathcal{C}$ is denoted as $[n,k,d]$. A linear code
$\mathcal{C}$$=[n,k,d]$ is optimal if its minimum distance $d$ can
meet the largest value for given $n,k$,  which is denoted as
$\mathcal{C}$$=[n,k,d_{a}(n,k)]=[n,k,d_{a}]$.
 Two binary codes $\mathcal{C}$ and $\mathcal{C}'$ are
equivalent  if one can be obtained from the other by permuting the
coordinates [1], they are denoted  as  $\mathcal{C}$ $ \cong$
$\mathcal{C}'$. A matrix whose rows form a basis of $\mathcal{C}$ is
called a generator matrix of this code.

 The dual code $\mathcal{C}^{\perp}$ of $\mathcal{C}$ is defined as
$\mathcal{C}$$^{\perp}$= $\{x\in F_{2}^{n} \mid x\cdot y=xy^{T}=0
 \hbox{ for all}\  y \in  \mathcal{C} \}$.
 A code $\mathcal{C}$ is {\it self-orthogonal} (SO) if
 $\mathcal{C}\subseteq\mathcal{C}^{\perp}$.
  The hull of a linear code $\mathcal{C}$ was defined as
$Hu(\mathcal{C})=\mathcal{C}^{\perp}\cap\mathcal{C}$  in [2], and
was called a radical code of $\mathcal{C}$ in the nomenclature of
classical group in [3]. Define $h(\mathcal{C})=$dim$Hu(\mathcal{C})$
as the {\it hull dimension} of $\mathcal{C}$ and
$h([n,k,d])=min\{h(\mathcal{C})\mid \mathcal{C} \hbox{ is a binary
$[n,k,d] $ code } \}$.

If $Hu(\mathcal{C})=\{0\}$ (or $h(\mathcal{C})=0$), $\mathcal{C}$ is
an LCD code [4]. LCD cyclic codes were introduced by Massey [4] and
gave an optimal linear coding solution for the two user binary adder
channel. Carlet et al. showed that LCD codes can be used  to fight
against side-channel attacks [5]. In recent years, much work has
been done on property and construction of LCD codes [5-15,24,25]. It
has been shown in [6] that any code over $F_{q}$ is equivalent to
some LCD code for $q \geq 4$,  which motivates people to study
binary and ternary LCD codes. In this paper we focus on the hull
dimension of binary optimal codes and  LCD codes.

It is an important problem to determine the largest minimum weight
$d_{l}(n,k)$ among all LCD $[n, k]$ codes and to construct  LCD $[n,
k,d_{l}(n,k)]$ codes for given $n, k$  [5-15,24,25]. Recently,
construction of optimal LCD  codes with short lengths or low
dimension are discussed, and low and upper bound for $d_{l}(n,k)$
have been established  in  [6-14]. If $n\leq 24$ and $1\leq k\leq
n$,
 $d_{l}(n,k)$ were determined.  If $k\leq n\leq 40$,
most of $d_{l}(n,k)$ were determined in [6-15]. If $k\leq 4$, all
$d_{l}(n,k)$ were determined in [8-12]. As for $k=5$, $d_{l}(n,5)$
were partially determined in [11-13] except $n=31s+t\geq 40$ and $t
\in \{2,8,10,12,14,16,18\}$. In [15], Li {\it  et al.} introduced
the {\it reduced code} of a linear code and developed some new
approach to determine upper bounds on $d_{l}(n,6)$ by determining
hull dimensions of $[n, 6]$ optimal linear codes, and construct many
optimal  $[n, 6]$ LCD codes.

A code $\mathcal{C}$$=[n,k]$ with  generator matrix $G$ is an LCD
code if and only if the matrix $GG^{T}$ is invertible [4]. Thus, to
prove non-existence of an $[n,k, d_{a}]$ LCD code, one only needs to
 verify  $h=k-(rank (GG^{T}))\geq 1$
  for each $\mathcal{C}=[n,k,d_{a}]$ with  generator matrix $G$, that is to show $h([n,k, d_{a}])\geq 1$.


In  [17],  Li {\it  et al.} introduced two concepts called the {\it
defining vector} and {\it weight vector}
 of an $[n, 5,d]$  linear code, and  established relations among parameters
 of this code, its defining vector and weight vector. They
changed the problem of determining linear codes into solving the
system of linear equations. Further research on  defining vectors
and weight vectors of optimal linear codes and their applications
were made in [18,19]. The classification of all $[n, k]$ optimal
linear codes with $k\leq 4$ [18] and some $[n, k]$ optimal linear
codes with $k\geq 5$ were determined [18,19].

 Inspired by Refs. [15,17-19], we will show all $[n, 5]$ optimal
linear codes are not LCD  for $n=31s+t\geq 14$ and $t \in
\{2,8,10,12,14,16,18\}$. Now set $k=5$ and $N=31$. Denote
$L=(l_{1},l_{2},\cdots,l_{31})$ as a  defining  vector of a given
$[n, 5, d_{a}]$ (for details see Section 2), and let
$l_{max}=max_{1\leq i\leq N}\{ l_{i} \}$, $l_{min}=min_{1\leq i\leq
N}\{  l_{i} \}$. The main techniques used in this manuscript can be
briefly described  as follows (for details see Section 3):

 (1) From parameters of $[n,5, d_{a}]$,  estimate $l_{max}$ and $l_{min}$ according to Ref. [18].

 (2) According to values of $l_{max}$ and $l_{min}$, analyze conditions under which a reduced code $\mathcal{D}$ of
 an  $[n,5, d_{a}]$ can satisfy $h(\mathcal{D})\geq 2$.

 (3) If $l_{max}$ and $l_{min}$ do not satisfy (2),  determine all such $L$'s and all $[n,5,
 d_{a}]$ codes with  defining vectors $L$'s, classify  $[n,5, d_{a}]$ codes and calculate
 their  hull dimensions and  their weight enumerators.

  Our main result in this paper is  Theorem \ref{the1.1}.

\begin{theorem}\label{the1.1}
 If $s$ is an integer,  $t\in \{2, 8,10,12,
14,16,18\}$ and $n=31s+t\geq 14$,  then an optimal
$[n,5,d_{a}(n,5)]$ linear code is not an LCD code, and
$d_{l}(n,5)=d_{a}-1$ if $t\neq 16$ and $d_{l}(n,5)=d_{a}-2$ if $t=
16$.
\end{theorem}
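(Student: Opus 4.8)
The plan rests on one structural observation: over $F_{2}$ the hull dimension of $\mathcal{C}$ depends only on the parities of the entries of its defining vector. Write $v_{1},\dots,v_{31}$ for the nonzero vectors of $F_{2}^{5}$, let $L=(l_{1},\dots,l_{31})$ be the defining vector of $\mathcal{C}=[n,5,d]$ (a generator matrix $G$ carries $v_{i}$ with multiplicity $l_{i}$ and $n=\sum_{i}l_{i}$), and set $M=\{i:l_{i}\text{ odd}\}$. Since $GG^{T}=\sum_{i}l_{i}v_{i}v_{i}^{T}$ over $F_{2}$ and $\sum_{i=1}^{31}v_{i}v_{i}^{T}=16I+8(J-I)\equiv 0\pmod 2$ (with $J$ the all-ones matrix, since a diagonal entry counts the $2^{4}$ nonzero vectors with a fixed coordinate equal to $1$ and an off-diagonal entry the $2^{3}$ with two such coordinates),
\[
GG^{T}\equiv\sum_{i\in M}v_{i}v_{i}^{T}\equiv\sum_{i=1}^{31}(l_{i}-s)\,v_{i}v_{i}^{T}\pmod 2,
\]
so $h(\mathcal{C})=5-\mathrm{rank}(GG^{T})$ is fixed by $M$ alone. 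When $l_{min}=s$ the last sum equals $G_{\mathcal{D}}G_{\mathcal{D}}^{T}$ for the reduced code $\mathcal{D}$ of defining vector $(l_{1}-s,\dots,l_{31}-s)$ and length $t$, giving $h(\mathcal{C})=h(\mathcal{D})$; and because each nonzero $x$ satisfies $x\cdot v_{i}=1$ for exactly $16$ indices, $w_{\mathcal{C}}(x)=16s+w_{\mathcal{D}}(x)$, hence $d=16s+d(\mathcal{D})$. One checks $d_{a}(n,5)=16s+\delta(t)$, where $\delta(t)$ is the largest minimum weight of a $t$-column configuration in $F_{2}^{5}$: $\delta(t)=0,3,4,5,6,8,8$ for $t=2,8,10,12,14,16,18$ respectively.

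First I would carry out step (1): the estimates of Ref.~[18] confine the defining vector of an optimal code to a narrow window, forcing $l_{min}=s$ and $l_{max}\le s+c$ for an explicit small $c$; this validates the reduction $h(\mathcal{C})=h(\mathcal{D})$ with $\mathcal{D}$ of length $t$, and restricts $\mathcal{D}$ to a finite, explicitly listed family. In step (2) I would dispatch the configurations for which $h(\mathcal{D})\ge 2$ is automatic: since $\sum_{i\in M}v_{i}v_{i}^{T}$ is a sum of $|M|$ rank-one matrices, $|M|\le 3$ forces $\mathrm{rank}(GG^{T})\le 3$ and hence $h(\mathcal{C})=h(\mathcal{D})\ge 2$, and the same holds whenever the odd-multiplicity columns form a self-orthogonal configuration. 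This settles $t=2$ (two columns, $h\ge 3$) and the unique optimal reduced code for $t=16$, namely $\mathrm{RM}(1,4)$, which is self-orthogonal with $h(\mathcal{D})=5$.

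The decisive work is step (3), for the residual values of $l_{max},l_{min}$ where $\mathrm{rank}(GG^{T})\in\{4,5\}$, i.e.\ $h(\mathcal{D})\in\{0,1\}$. Here I would enumerate all admissible reduced defining vectors, classify the corresponding reduced codes up to the action of $GL_{5}(F_{2})$ on their columns, and for each representative compute $h(\mathcal{D})=5-\mathrm{rank}(G_{\mathcal{D}}G_{\mathcal{D}}^{T})$ and its weight enumerator, the aim being to exclude $\mathrm{rank}(G_{\mathcal{D}}G_{\mathcal{D}}^{T})=5$ and so confirm $h(\mathcal{D})\ge 1$ for every optimal reduced code; equivalently, no $[t,5,\delta(t)]$ code is LCD. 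Via $h(\mathcal{C})=h(\mathcal{D})$ this yields $h([n,5,d_{a}])\ge 1$, so no optimal code is LCD. The matching value of $d_{l}$ then follows from the known LCD constructions [11-13] by ``reducing up'': for $t\neq 2,16$, adding $s$ to each entry of the defining vector of an LCD $[t,5,\delta(t)-1]$ code with invertible Gram matrix produces an LCD $[n,5,16s+\delta(t)-1]=[n,5,d_{a}-1]$ code, so $d_{l}(n,5)=d_{a}-1$. For $t=16$ an additional classification shows every length-$16$ reduced code with $d(\mathcal{D})=7$ also has $h(\mathcal{D})\ge 1$, ruling out $d=d_{a}-1$, while reducing up from an LCD $[16,5,6]$ code realizes $d=d_{a}-2$; hence $d_{l}(31s+16,5)=16s+6=d_{a}-2$. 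The exceptional case $t=2$, where $d_{a}-1=16s-1<16s$ forces $l_{min}=s-1$, is covered by the same known constructions.

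I expect step (3) to be the main obstacle, for two reasons. First, the estimates of Ref.~[18] must be pushed to pin the defining-vector window sharply and uniformly across all seven residues $t$---the cases split according to whether the code meets the Griesmer bound ($t=8,12,16$) or misses it by $1$ or $2$ ($t=10,14$ and $t=2,18$), and in each the admissible $l_{min},l_{max}$ must be determined exactly so that the reduction is guaranteed to apply. Second, one faces an exhaustive, equivalence-class-by-class classification of the reduced codes, whose number grows with $t$, in which $\mathrm{rank}(G_{\mathcal{D}}G_{\mathcal{D}}^{T})=5$ must be excluded for every optimal (and, for $t=16$, every distance-$7$) configuration. The delicacy is precisely that $h(\mathcal{D})\ge 2$ is cheap to certify, whereas distinguishing $h(\mathcal{D})=1$ from $h(\mathcal{D})=0$---the line between the theorem holding and failing---requires an exact rank computation on each class.
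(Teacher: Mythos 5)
Your central algebraic observation is correct and attractive: since $\sum_{i=1}^{31}v_iv_i^T=16I+8(J-I)\equiv 0\pmod 2$, the Gram matrix $GG^T$ over $F_2$ depends only on the set of columns whose multiplicity differs from $s$ in parity, and a juxtaposition with copies of $\mathcal{S}_5$ does not change the hull dimension. But the load-bearing claim of your step (1) --- that the optimality of $[n,5,d_a]$ forces $l_{min}=s$, so that the whole problem reduces to classifying $t$-column configurations in $F_2^5$ --- is false, and this is exactly where the real work lies. For every residue treated here except $t=2$ the admissible window is $s-2\le l_i\le s+2$ (for $t=18$ even $l_{max}=s+3$ occurs), and optimal codes with $l_{min}=s-1$ and $l_{min}=s-2$ genuinely exist in abundance: for $[31s+10,5,16s+4]$ alone the paper exhibits $19$ inequivalent classes with $l_{min}=s-1$ (residual $[41,5,20]$ codes) and $13$ with $l_{min}=s-2$ (residual $[72,5,36]$ codes). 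For such a code the set $M'=\{i:l_i\not\equiv s\}$ has size $a+c$ or $b+d$ far exceeding $t$ (e.g.\ $12$ to $24$ columns for $t=10$), so your enumeration of length-$t$ configurations simply does not see these codes, and your parity identity gives no control over which subsets $M'$ arise without first classifying the residual codes of lengths $31+t$ and $62+t$. The same omission infects your treatment of $t=2$ (where $l_{min}$ need not be $s$, and the paper instead passes to a dimension-$4$ reduced code $[30s+1,4,16s]$ and then to a self-orthogonal $[28s,3,16s]$ code) and of $t=16$ at distance $d_a-1$ (which the paper handles by extending to $[31s+17,5,16s+8]$ and classifying the $[48,5,24]$ residual codes, again with $l_{min}=s-1$).

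Note also that the paper's ``reduced code'' is not your residual code $L-s\mathbf{1}$: it is the codimension-one subcode obtained by deleting a generator row together with its support (Definition 1), which lowers the dimension by one and costs one unit of hull dimension (Lemma 2.5); this is what disposes of the cases $l_{max}\ge s+2$, which your plan does not address at all (and which your parity identity is blind to, since $s+2\equiv s$). To repair your argument you would have to replace ``classify the $[t,5,\delta(t)]$ configurations'' by ``classify all nonnegative integer solutions of $L^T=P_5^{-1}W^T$ for the residual lengths $31c+t$, $c=0,1,2$, and compute $\mathrm{rank}(GG^T)$ for each equivalence class'' --- i.e.\ essentially the computation occupying Sections 3.2--3.4 and Tables 2--7 of the paper. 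The parts of your proposal that do survive are the cheap certificates ($|M'|\le 3$ forces $h\ge 2$; self-orthogonal odd-column configurations force $h=5$; the uniqueness and self-orthogonality of $[16,5,8]=\mathcal{MD}_{5,4}$) and the ``reducing up'' construction of the matching LCD codes, which agrees with the paper's juxtaposition of $s\mathcal{S}_5$ with known short optimal LCD codes.
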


  Combining with results of Refs. [11-14] on optimal LCD codes, we can completely
 determine $d_{l}(n, 5)$  for all $n\geq 5$, which is shown in Table \ref{tab1} and  Theorem \ref{the1.2}
.

\begin{table}[h!]
\caption{Minimum distances of optimal LCD $[n,5]_{2}$ codes with
$n=31s+t\ge 14$ }\label{tab1}

\begin{tabular} {l|llllllllllll}
 \hline \hline
  n&$\scriptsize{31s}$  &$\scriptsize{31s+1}$ & $\scriptsize{31s+2}$&$\scriptsize{31s+3}$ &$\scriptsize{31s+4}$&
$\scriptsize{31s+5}$
&$\scriptsize{31s+6}$\\
\hline
 $d_a$&$\scriptsize{16s}$  & $\scriptsize{16s}$ & $\scriptsize{16s}$ & $\scriptsize{16s}$ & $\scriptsize{16s}$ &
$\scriptsize{16s+1}$ & $\scriptsize{16s+2}$  \\
\hline
 $d_l$&$\scriptsize{16s-2}$  & $\scriptsize{16s-1}$ & $\scriptsize{16s-1}$ & $\scriptsize{16s}$ & $\scriptsize{16s}$ &
$\scriptsize{16s+1}$ &$\scriptsize{16s+1}$ \\
 \hline \hline

  n&$\scriptsize{31s+7}$&$\scriptsize{31s+8}$&$\scriptsize{31s+9}$&
\!\!\!$\scriptsize{31s+10}$ &$\scriptsize{31s+11}$&$\scriptsize{31s+12}$&$\scriptsize{31s+13}$\\
\hline
 $d_a$&\!\!\! $\scriptsize{16s+2}$&
$\scriptsize{16s+3}$& $\scriptsize{16s+4}$ & $\scriptsize{16s+4}$&
$\scriptsize{16s+4}$& $\scriptsize{16s+5}$& $\scriptsize{16s+6}$
  \\
\hline
 $d_l$&$\scriptsize{ 16s+2}$&
$\scriptsize{16s+2}$& $\scriptsize{16s+3}$ & $\scriptsize{16s+3}$ &
$\scriptsize{16s+4}$& $\scriptsize{16s+4}$& $\scriptsize{16s+5}$
  \\
 \hline \hline
   n& $\scriptsize{31s+14}$
&$\scriptsize{31s+15}$&$\scriptsize{31s+16}$ &
\!\!\!$\scriptsize{31s+17}$&$\scriptsize{31s+18}$
&$\scriptsize{31s+19}$&\!\!\! $\scriptsize{31s+20}$\\
\hline
  $d_a$ &  $\scriptsize{16s+6}$&
$\scriptsize{16s+7}$&  $\scriptsize{16s+8}$ & $\scriptsize{16s+8}$ &
$\scriptsize{16s+8}$ & $\scriptsize{16s+8}$ & $\scriptsize{16s+9}$
  \\
\hline $d_l$ &  $\scriptsize{16s+5}$& $\scriptsize{16s+6}$&
$\scriptsize{16s+6}$ & $\scriptsize{16s+7}$ & $\scriptsize{16s+7}$ &
$\scriptsize{16s+8}$ &$\scriptsize{16s+9}$\\
 \hline \hline
n&$\scriptsize{31s+21}$&$\scriptsize{31s+22}$&$\scriptsize{31s+23}$&$\scriptsize{31s+24}$&
\!\!\!$\scriptsize{31s+25}$&$\scriptsize{31s+26}$&$\scriptsize{31s+27}$\\
\hline
 $d_a$&   $\scriptsize{16s+10}$&
$\scriptsize{16s+10}$& $\scriptsize{16s+11}$& $\scriptsize{16s+12}$
& $\scriptsize{16s+12}$& $\scriptsize{16s+12}$&
$\scriptsize{16s+13}$
  \\
\hline
 $d_l$&   $\scriptsize{16s+9}$&
$\scriptsize{16s+10}$& $\scriptsize{16s+10}$ &
$\scriptsize{16s+11}$& $\scriptsize{16s+11}$& $\scriptsize{16s+12}$&
$\scriptsize{16s+12}$\\
 \hline \hline

n&$\scriptsize{31s+28}$&$\scriptsize{31s+29}$
&$\scriptsize{31s+30}$\\
\hline
 $d_a$& $\scriptsize{16s+14}$& $\scriptsize{16s+14}$&
$\scriptsize{16s+15}$
  \\
\hline
 $d_l$& $\scriptsize{16s+13}$& $\scriptsize{16s+13}$&
$\scriptsize{16s+14}$\\
 \hline \hline
\end{tabular}
\end{table}

\begin{theorem}\label{the1.2} If $n=31s+t\geq 5$, then there are optimal
 LCD codes as follows:

(1) ([7,8]) If $5\leq n\leq 13$ and $n\neq 6,10$, then there is an
$[n,5,d_{a}(n, 5)]$ optimal  LCD code, while $n= 6,10$, an optimal
LCD $[n,5,d_{a}(n, 5)-1]$ exists.

(2) ([9-13])  If $t=3,4,5,7,11,19,20,22,26$, $n=31s+t\geq 14$, there
is an  $[n,5,d_{a}(n, 5)]$ optimal LCD code.

(3) If $t\neq 0, 3,4,5,7,11,16,19,20,22,26$ and $n=31s+t\geq 14$,
there is an $[n,5,d_{a}(n, 5)-1]$ optimal LCD code according to
Refs. [9-13] and Theorem \ref{the1.1} above.

(4) If $t= 0, 16$ and $n=31s+t\geq 14$, there is an $[n,5,d_{a}(n,
5)-2]$  optimal LCD code according to Ref. [17] and Theorem
\ref{the1.1} above.

\end{theorem}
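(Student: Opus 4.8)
The plan is to read Theorem~\ref{the1.2} as a synthesis statement that, for every admissible length $n=31s+t\ge 5$, determines $d_l(n,5)$ by pairing an \emph{upper} bound (no LCD code of larger minimum distance exists) with a matching \emph{lower} bound (an explicit LCD code attaining the claimed distance), in agreement with Table~\ref{tab1}. Accordingly I would partition all lengths by the residue $t$ and by the regime $5\le n\le 13$ versus $n\ge 14$, and in each class assemble the two bounds from the appropriate source.

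The short-length block $5\le n\le 13$, giving part~(1), is a direct appeal to [7,8]; here the only thing to verify is that the exceptional lengths $n=6,10$, where $d_l=d_a-1$, are correctly separated from the lengths attaining $d_a$. The block $t\in\{3,4,5,7,11,19,20,22,26\}$, giving part~(2), is the case where the linear and LCD optima coincide: the constructions of [9-13] yield $[n,5,d_a]$ LCD codes for every $s$, and since $d_a$ is already the linear optimum the upper bound is automatic. The point needing attention is that these constructions hold for the whole progression in $s$ and not merely for small representatives.

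Parts~(3) and~(4) are where Theorem~\ref{the1.1} enters. For the newly settled residues $t\in\{2,8,10,12,14,18\}$, Theorem~\ref{the1.1} supplies exactly the missing upper bound $d_l(n,5)\le d_a-1$, while for the remaining residues listed in part~(3) the equality $d_l=d_a-1$ is already available from [9-13]; in both cases the attaining $[n,5,d_a-1]$ LCD codes are furnished by [9-13], so $d_l=d_a-1$. For $t\in\{0,16\}$, part~(4), Theorem~\ref{the1.1} gives the stronger upper bound $d_l(n,5)\le d_a-2$ (for $t=16$ even distance $d_a-1$ is excluded, a fact built into Theorem~\ref{the1.1}), and [17] provides the matching $[n,5,d_a-2]$ LCD codes.

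The main obstacle is the existence side for the infinite families, which Theorem~\ref{the1.1}, being a non-existence result, does not itself supply: one must construct LCD codes meeting the lower bounds uniformly in $s$. I would handle this by a lengthening construction based on the dimension-$5$ binary simplex generator $S$ (length $N=31$, every nonzero codeword of weight $16$): starting from a base LCD code of dimension $5$ for the relevant residue with generator $G_{0}$, the code generated by $G=[\,G_{0}\mid S\,]$ has length increased by $31$ and minimum distance increased by $16$. Since each row of $S$ has weight $16$ and any two distinct rows meet in $8$ positions, $SS^{T}=0$ over $F_{2}$, whence $GG^{T}=G_{0}G_{0}^{T}$ and the LCD property is preserved exactly; iterating gives the required codes for all $s$ from finitely many base codes. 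The delicate verification is that the target distance $d_a-1$ (resp. $d_a-2$) is met at each step and that the boundary between the short-length and generic regimes leaves no residue-size pair uncovered.
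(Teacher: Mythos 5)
Your overall architecture matches the paper's: Theorem~\ref{the1.2} is not given a standalone proof there but is assembled exactly as you describe, by pairing the non-existence statements of Theorem~\ref{the1.1} (and of the cited references) with known constructions attaining $d_a$, $d_a-1$ or $d_a-2$, residue class by residue class. Your explicit lengthening argument --- juxtaposing a generator matrix $G_{0}$ with the simplex generator $S_{5}$, using that every row of $S_{5}$ has weight $16$ and any two distinct rows meet in $8$ positions, so that $S_{5}S_{5}^{T}=0$ and $GG^{T}=G_{0}G_{0}^{T}$ --- is a genuine addition: the paper invokes juxtaposition only for the linear-code side (in its Remark) and delegates the uniform-in-$s$ existence of LCD codes to Refs.~[9--13]. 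Your observation closes that delegation cleanly, since the minimum distance increases by exactly $16$ at each step and $d_a(31(s+1)+t,5)=d_a(31s+t,5)+16$ for $n\geq 14$ by the Griesmer bound, so finitely many base codes per residue suffice.

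The one concrete error is in part (4) for $t=0$: Theorem~\ref{the1.1} is stated only for $t\in\{2,8,10,12,14,16,18\}$, so it supplies no upper bound at all for $n=31s$. For that residue the bound $d_l(31s,5)\leq 16s-2$ must come from Ref.~[17] (restated as Lemma~3 of the paper): every $[31s,5,16s]$ code is equivalent to $s\mathcal{S}_{5}$ and hence self-orthogonal, which rules out distance $16s$; and one still needs a separate argument --- not supplied by you, and only cited rather than proved in the paper --- excluding an LCD $[31s,5,16s-1]$ code, for instance via the extended-code Lemma~\ref{lem3} applied to the odd distance $16s-1$. As written, your justification of part (4) establishes the $t=16$ case but leaves the $t=0$ case resting on a theorem whose hypotheses it does not satisfy.
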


\begin{remark} From Ref. [16], it is easy to know all optimal [n,5]
linear codes   can achieve the Griesmer bound for $14 \leq n \leq
256$. For $n> 256$,  the length $n$ can be denoted as
 $n=31s+t$,  where $s\geq 7$ and $31\leq t\leq 61$ are integers.
 By the juxtaposition of $s$ simplex codes [31,5,16]
 and an optimal linear code $[t, 5, d_{a}(t,5)]$, one can easily obtain  all $[n, 5, d_{a}(n,5)]$ optimal
 linear codes with $d_{a}(n, 5)$ achieving the Griesmer bound for $
 n>256$. That is to say any $d_{a}(n, 5)$ can be obtain by the Griesmer
 bound for all length $n\geq 14$. It naturally follows that $d_{l}(n, 5)$ can be  denoted by  $d_{a}(n,
 5)$ as Theorems \ref{the1.1} and \ref{the1.2}.
\end{remark}

 The rest of this paper is organized as follows. In Section 2, some
definitions, notations and basic results about optimal LCD codes are
given. The proof of the main result Theorem \ref{the1.1} is provided
in Section 3. Section 4 gives conclusion and discussion.

\section{Preliminaries}

In this section,  some concepts and notations are given for later
use. The all-one vector and  zero vector of length $n$ are defined
as $\bf{1_{n}}$=$(1,1,\cdots,1)_{1\times n}$ and
$\bf{0_{n}}$=$(0,0,...,0)_{1\times n}$, respectively. Let
$iG=(G,G,\cdots,G)$ be the juxtaposition  of $i$ copies of $G$ for
given matrix $G$, then the juxtaposition  of $i$ copies of
$\mathcal{C}=[n,k]$ can be denoted as $i\mathcal{C}$ with generator
matrix $iG$. In this article,  we consider linear codes without zero
coordinates and matrices without zero columns.

We introduce some concepts and results in [17-19] at  first. Let
$N=2^{k}-1$, consider

\begin{center}
$\scriptsize{\mathbf{S}_{2}=\left( \begin{array}{lllllll}
101\\
011\\
\end{array} \right)}$, $\scriptsize{\mathbf{S}_{3} =\left(
\begin{array}{ccccccccccccc}
\mathbf{S}_{2}& \mathbf{0}^{T}_{2}&\mathbf{S}_{2}\\
 \mathbf{0}_{3}&1&\mathbf{1}_{3}\\
\end{array} \right)}, \cdots,$
 $\scriptsize{\mathbf{S}_{k+1}=\left(
\begin{array}{ccccccccccccc}
\mathbf{S}_{k}& \mathbf{0}^{T}_{k}&\mathbf{S}_{k}\\
 \mathbf{0}_{2^{k}-1}&1&\mathbf{1}_{2^{k}-1}\\
\end{array} \right)}$.
\end{center}
The matrix $S_{k}$ generates the $k$-dimensional simplex code
$\mathcal{S}_{k}=[2^{k}-1,k,2^{k-1}] $. Let $\alpha_{i}$ be the
$i$-th column of $S_{k}$ for $1\leq i\leq N$. The last $2^{k}-2^{m}$
columns of $S_{k}$ form a matrix $M_{k,m}$ for $1\leq m\leq k-1$,
$M_{k,m}$ generates the $k$-dimensional
$\mathcal{MD}_{k,m}=[2^{k}-2^{m},k,2^{k-1}-2^{m-1}] $ MacDonald code
[20]. Simplex codes $\mathcal{S}_{k}$ and MacDonald codes
$\mathcal{MD}_{k,m}$  for $k\geq 4$ will be used to discuss the hull
dimensions of some optimal codes.

Let $N=2^{k}-1$ and  $G=G_{k\times n}$ be a generator matrix of
$\mathcal{C}=[n,k]$. If there are $l_{i}$ copies of $\alpha_{i}$ in
$G$ for $1\leq i\leq N$,  we denote $G$ as
$G=(l_{1}\alpha_{1},\cdots,l_{N}\alpha_{N})$ for short, and call
$L=(l_{1},\cdots,l_{N})$ the {\it defining vector}  of $G$ or
$\mathcal{C}$. Let
 $l_{j_{l}}$ ($1\leq l\leq t$) be
different coordinates of $L=(l_{1},l_{2},\cdots,l_{N})$ with
$l_{j_{1}}< l_{j_{2}}<\cdots<l_{j_{t}}$ in ascending order by  the
number of equal $l_{j_{l}}$. If there are $m_{l} $ entries equal to
$l_{j_{l}}$,  we say  $L$ is of {\it  type}
$]](l_{j_{l}})_{m_{l}}\mid \cdots\mid(l_{j_{t}})_{m_{t}}]]$. For
example, a code with defining vector  $L_{1}=(3,1,1,3,1,3,1)$ is an
SO code, this can be derived from  type   $]](1)_{4}\mid (3)_{3}]]$
of $L_{1}$, and $L_{2}=(s+1,s-1,s,s,s+1,s-1,s+1)$ is of type
$]](s-1)_{2}\mid (s)_{2}\mid(s+1)_{3}]]$.

Parameters and some properties of an $[n,k,d]$ code  can be derived
from its defining vector $L$.  Relations among these objects are
connected by some matrices $P_{k}$ and $Q_{k}$ derived from simplex
code $\mathcal{S}_{k}$ [17,18].   On the other hand, if $[n,k,
d_{a}]$ is an optimal code, we can determine all defining vector
$L$'s whose corresponding  codes have such parameters  by solving
linear equations.   We adopt the treatment of Ref. [18] here, which
is equivalent to that of Ref. [17].

 Let $J_{k}$ be the $(2^{k}-1)\times (2^{k}-1)$ all-one  matrix
 and $P_{2}$ be a $(2^{2}-1)\times (2^{2}-1)$ matrix whose
rows are the non-zero codewords of $\mathcal{S}_{2} $. Using
recursive method, construct
$$P_{2}=\left( \begin{array}{ccccccc}
101\\
011\\
110\\
\end{array} \right),
P_{3}= \left(
\begin{array}{ccccccc}
P_{2}&0&P_{2}\\
\bf{0_{3}}&1& \bf{1_{3}}\\
P_{2}&\bf{1^{T}_{3}}&Q_{2}\\
\end{array}
\right), \cdots, P_{k+1}= \left(
\begin{array}{ccccccc}
P_{k}&\bf{0^{T}_{2^{k}-1}}&P_{k}\\
\bf{0_{2^{k}-1}}&1& \bf{1_{2^{k}-1}}\\
P_{k}&\bf{1^{T}_{2^{k}-1}}&Q_{k}\\
\end{array}
\right),$$
 where $Q_{k}=J_{k}-P_{k}$ for $k\geq 2$.
 Then the seven rows of $P_{3}$  are just  the seven nonzero vectors
of the simplex code $\mathcal{S}_{3}$  $=[7, 3, 4] $. For $k\geq 3$,
then the matrix formed by nonzero codewords of $k+1$-dimensional
simplex code can be obtained from $P_{k}$. Each row of $P_{k}$ has
$(2^{k-1})$'s ones and  $(2^{k-1}-1)$' zeros. Hence each row of
$Q_{k}$ has $(2^{k-1}-1)$'s ones and $(2^{k-1})$'s zeros. According
to Ref. [18], $P_{k}$ and $Q_{k}$ are symmetric matrices, and the
matrix $P_{k}$ is invertible over the rational field and
$P^{-1}_{k}=\frac{1}{2^{k-1}}[J_{k}-2Q_{k}]$.

 If  $\mathcal{C}=[n,k]$ has a generator matrix
 $G=(l_{1}\alpha_{1},\cdots,l_{N}\alpha_{N})$,
the distance $d$ of $\mathcal{C}$ and its codewords weight can be
determined by its  defining vector $L=(l_{1},\cdots,l_{N})$. Let
 $$W^{T}=P_{k}L^{T},$$ then $W=(w_{1},w_{2},\cdots,w_{N})$ is  a vector
formed by weights of $2^{k}-1$ nonzero codewords of $\mathcal{C}$
and $d$$=min_{1\leq i\leq 2^{k}-1 }\{w_{i}\}$ is the distance of
$\mathcal{C}$.  $W$ is called the {\it weight vector} of
$\mathcal{C}$ [17-19]. Suppose $$W=d{\bf1}_{2^{k}-1}+\Lambda,$$
where
$$\Lambda=(\lambda_{1},\lambda_{2},\cdots,\lambda_{N})$$ with
$\lambda_{i}=w_{i}-d\geq 0$ and at least one $\lambda_{i}=0$. Denote
$\sigma=\lambda_{1}+\lambda_{2}+\cdots+\lambda_{N}$, then
$$\sigma=2^{k-1}n-d(2^{k}-1)$$ from  $W^{T}$=$P_{k}L^{T}$.

Suppose there is an $[n,k,d]$ code, to determine the defining vector
$L=(l_{1},l_{2},\cdots,l_{N})$, one can solve the system of  linear
equations
$$\qquad \qquad \qquad L^{T}=P^{-1}_{k}W^{T}
=\frac{1}{2^{k-1}}[(d+\sigma){\bf 1}^{T}_{2^{k}-1}-2Q_{k}
\Lambda^{T}].\qquad \qquad \qquad ( \star)$$
  By determining all
nonnegative integer solutions $L$ of the linear equations  for given
$\sigma=2^{k-1}n-d(2^{k}-1)$, one can obtain all $[n,k,d]$ codes and
their weight distributions using software MATLAB [22]. The process
of solving the linear equations were simplified in [17,18], and
uniqueness of some optimal codes were derived as the following known
conclusions.

\begin{proposition} ([17] Theorem 1.1)
 Suppose $k\geq 3$, $s\geq
1$, $1\leq t\leq 2^{k}-2$ and $n=(2^{k}-1)s+t$. Then every  binary
$[n,k,d]$ code with $d\geq (2^{k-1})s$ and without zero coordinates
is equivalent to a code with generator matrix
$G=((s-c(k,s,t))$$\scriptsize{\mathbf{S}_{k}}$$\mid B)$, where
$c(k,s,t)\leq min\{s,t\}$ is a function of $k,s$ and $t$, and $B$
has $(2^{k}-1)c(k,s,t)+t$ columns.
\end{proposition}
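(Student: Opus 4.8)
The plan is to work entirely with the defining vector $L=(l_{1},\dots,l_{N})$ of the code, where $N=2^{k}-1$, and to reduce everything to two bounds on its smallest entry $l_{min}=\min_{i}l_{i}$. Since the code has no zero coordinates, every column of a generator matrix equals one of the nonzero vectors $\alpha_{1},\dots,\alpha_{N}$, so $L$ is well defined, each $l_{i}\ge 0$, and $\sum_{i}l_{i}=n$. Because column permutations are code equivalences, once I know $l_{i}\ge s-c$ for every $i$ and some integer $c\ge 0$, I may reorder the columns so as to group $(s-c)$ copies of each $\alpha_{i}$ at the front; those $(s-c)(2^{k}-1)$ columns are exactly the juxtaposition $(s-c)\mathbf{S}_{k}$, and the leftover columns (with multiplicities $l_{i}-(s-c)\ge 0$) form $B$, which then carries $n-(s-c)(2^{k}-1)=(2^{k}-1)c+t$ columns. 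Thus the whole proposition comes down to producing an integer $c$ with $s-c=l_{min}$ and $0\le c\le\min\{s,t\}$.

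The lower bound $c\ge 0$, equivalently $l_{min}\le s$, is a pure counting statement: the mean of the entries of $L$ is $\frac{1}{2^{k}-1}\big((2^{k}-1)s+t\big)=s+\frac{t}{2^{k}-1}$, and since $1\le t\le 2^{k}-2$ this mean lies strictly between $s$ and $s+1$. Hence $l_{min}\le s$, and setting $c:=s-l_{min}$ immediately gives $0\le c\le s$.

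The substantive step is the bound $c\le t$, i.e. $l_{min}\ge s-t$, and here I would use the identity $L^{T}=\tfrac{1}{2^{k-1}}\big[(d+\sigma)\mathbf{1}^{T}-2Q_{k}\Lambda^{T}\big]$ together with $\sigma=2^{k-1}n-d(2^{k}-1)$. Substituting $n=(2^{k}-1)s+t$ yields $\sigma=(2^{k}-1)(2^{k-1}s-d)+2^{k-1}t$, so the hypothesis $d\ge 2^{k-1}s$ forces $\sigma\le 2^{k-1}t$; and since every $\lambda_{j}\ge 0$ while $Q_{k}$ is a $0/1$ matrix, the $i$-th component of $Q_{k}\Lambda^{T}$ never exceeds $\sum_{j}\lambda_{j}=\sigma$. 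Reading off the coordinate of $L^{T}$ that attains the minimum then gives $l_{min}\ge\tfrac{1}{2^{k-1}}\big[(d+\sigma)-2\sigma\big]=\tfrac{d-\sigma}{2^{k-1}}\ge\tfrac{2^{k-1}s-2^{k-1}t}{2^{k-1}}=s-t$. Therefore $c=s-l_{min}\le t$, which combined with $c\le s$ yields $c\le\min\{s,t\}$.

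I expect the only delicate points to be bookkeeping rather than ideas, and the main thing to get right is the inequality $(Q_{k}\Lambda^{T})_{i}\le\sigma$. It rests on $\Lambda\ge 0$ (which holds because $d=\min_{i}w_{i}$, so each $\lambda_{i}=w_{i}-d\ge 0$) and on $Q_{k}$ having $0/1$ entries, both of which follow from the recursive description of $P_{k},Q_{k}$ and from $P_{k}^{-1}=\tfrac{1}{2^{k-1}}[J_{k}-2Q_{k}]$ recalled above. Once $0\le c\le\min\{s,t\}$ is in hand, the factorization $G\cong((s-c)\mathbf{S}_{k}\mid B)$ is immediate by column permutation, rank $k$ is preserved, and the column count of $B$ is forced. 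The quantity produced this way is $c=s-l_{min}$; although it depends a priori on the particular code, the two bounds show it always lands in $[0,\min\{s,t\}]$, which is exactly what the statement requires.
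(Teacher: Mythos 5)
Your argument is correct, but note that the paper itself offers no proof of this Proposition: it is imported verbatim as Theorem 1.1 of [17], so the only internal material to compare against is the defining-vector machinery recalled in Section 2, and your proof is precisely an instantiation of that machinery. Both halves of your reduction check out. The bound $l_{min}\le s$ follows from the averaging observation $n/(2^{k}-1)=s+t/(2^{k}-1)<s+1$ together with integrality; the bound $l_{min}\ge s-t$ follows from $(\star)$ via $\sigma=(2^{k}-1)(2^{k-1}s-d)+2^{k-1}t\le 2^{k-1}t$ and $(Q_{k}\Lambda^{T})_{i}\le\sigma$ (valid since $Q_{k}$ is a $0/1$ matrix and $\Lambda\ge 0$); and once $l_{i}\ge s-c$ for all $i$ with $c=s-l_{min}\in[0,\min\{s,t\}]$, the factorization $G\cong((s-c)\mathbf{S}_{k}\mid B)$ and the column count of $B$ are pure bookkeeping under coordinate permutation. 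The one place you deviate from the letter of the statement is that your $c=s-l_{min}$ depends a priori on the individual code, whereas the Proposition calls $c(k,s,t)$ a function of $k,s,t$ alone; this is harmless, because replacing your $c$ by its maximum over all $[n,k,d\ge 2^{k-1}s]$ codes (still at most $\min\{s,t\}$ by your bounds, and still yielding a valid factorization for every such code since a smaller prefix of simplex copies can always be peeled off) produces a uniform choice, and the paper only ever uses the inequality $c(k,s,t)\le\min\{s,t\}$ (indeed only $c\le 2$ in its Notation 1).
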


\begin{notation}
 For $s\geq 0$, $n=31s+t\geq 14$ with $t
\in \{2,8,10,12,14,16,18\}$, one can check that an
$[n,5,d_{a}(n,5)]$ optimal linear code without zero coordinates is
equivalent to a code with generator matrix
$G=((s-c(k,s,t))$$\scriptsize{\mathbf{S}_{k}}$$\mid B)$, where
$c(k,s,t)\leq  2$  and $B$ has $(2^{k}-1)c(k,s,t)+t$ columns. To
determine all nonnegative integer solutions $L$ of the system of
linear equations for given $\sigma=2^{k-1}n-d(2^{k}-1)$, one only
needs to determine all nonnegative integer solutions for fixed
lengths $n'=(2^{k}-1)c(k,s,t)+t$ (see Section 3 for details).
\end{notation}

\begin{lemma}
  Let $s\geq 1$, $k\geq 4$, $1\leq
m\leq k-1$, $N=2^{k}-1$. Then the following holds:\\
 1 ( [17] Corollary 2.2) Every $[sN, k, s2^{k-1}]$ code is equivalent
to the SO code with generator matrix
$s\scriptsize{\mathbf{S}_{k}}$.\\
2 ([18,19]) Each $[n,k, d_{a}]=[sN+2^{k}-2^{m},
k,s2^{k-1}+2^{k-1}-2^{m-1}]$ code is equivalent to the code
$\mathcal{MD}$$_{s}(k,m)$, the juxtaposition of $s\mathcal{S}$$_{k}$
and
a $\mathcal{MD}$$(k,m)$ code.\\
Hence, if $m=1,2,$ and  $\geq 3$, then $h([n,k, d_{a}])=k-1,k-2$,
$k$, respectively. \end{lemma}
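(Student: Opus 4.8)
The plan is to read statements~1 and~2 as the cited structural results of Refs.~[17,18,19] and to concentrate the argument on the final (``Hence'') assertion about the hull dimension. Statement~2 shows that \emph{every} $[n,k,d_{a}]$ code of the stated MacDonald parameters is permutation-equivalent to the single code $\mathcal{MD}_{s}(k,m)$ with generator matrix $G=(s\mathbf{S}_{k}\mid M_{k,m})$. Since permutation-equivalent codes have equal hull dimension, $h([n,k,d_{a}])=h(\mathcal{MD}_{s}(k,m))$, so it suffices to evaluate the hull dimension of this one representative. For that I would invoke the identity $h(\mathcal{C})=k-\operatorname{rank}_{F_{2}}(GG^{T})$ recorded in the Introduction, reducing the whole problem to computing the $F_{2}$-rank of the $k\times k$ Gram matrix $GG^{T}$.

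The next step is to reduce $GG^{T}$ modulo $2$. The defining vector of $\mathcal{MD}_{s}(k,m)$ assigns multiplicity $l_{i}=s+1$ to the last $2^{k}-2^{m}$ columns $\alpha_{i}$ (those lying outside a fixed $m$-dimensional subspace $V_{m}\subseteq F_{2}^{k}$) and $l_{i}=s$ to the $2^{m}-1$ nonzero columns inside $V_{m}$. Viewing $G$ as a multiset of columns gives $GG^{T}=\sum_{i}l_{i}\,\alpha_{i}\alpha_{i}^{T}$, and over $F_{2}$ only the odd multiplicities survive, so $GG^{T}\equiv\sum_{i:\,l_{i}\ \mathrm{odd}}\alpha_{i}\alpha_{i}^{T}\pmod 2$. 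Exactly one of $s,s+1$ is odd. I would then observe that the outcome is independent of the parity of $s$: for $k\geq 4$ one has $\sum_{i=1}^{N}\alpha_{i}\alpha_{i}^{T}=\mathbf{S}_{k}\mathbf{S}_{k}^{T}=2^{k-2}(J_{k}+I_{k})\equiv \mathbf{0}\pmod 2$, so whether the odd-multiplicity columns are those inside $V_{m}$ (case $s$ odd) or those outside it (case $s$ even), both collapse to $GG^{T}\equiv\sum_{\alpha_{i}\in V_{m}\setminus\{0\}}\alpha_{i}\alpha_{i}^{T}\pmod 2$.

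It remains to compute the rank of this sub-sum. By the recursive definition of $\mathbf{S}_{k}$, the first $2^{m}-1$ columns (those of multiplicity $s$) are exactly the vectors $\binom{\beta_{i}}{0}\in F_{2}^{k}$ with $\beta_{i}$ running over $F_{2}^{m}\setminus\{0\}$ and the bottom $k-m$ entries zero, so $V_{m}$ is already in standard position and $\sum_{\alpha_{i}\in V_{m}\setminus\{0\}}\alpha_{i}\alpha_{i}^{T}=\operatorname{diag}(\mathbf{S}_{m}\mathbf{S}_{m}^{T},\mathbf{0})$. Hence $\operatorname{rank}_{F_{2}}(GG^{T})=\operatorname{rank}_{F_{2}}(\mathbf{S}_{m}\mathbf{S}_{m}^{T})$, which I would read off from the fact that $\mathbf{S}_{m}\mathbf{S}_{m}^{T}$ has diagonal entries $2^{m-1}$ and off-diagonal entries $2^{m-2}$: for $m=1$ it is $[1]$ (rank $1$); for $m=2$ it reduces mod $2$ to $\left(\begin{smallmatrix}0&1\\1&0\end{smallmatrix}\right)$ (rank $2$); and for $m\geq 3$ both $2^{m-1}$ and $2^{m-2}$ are even, so it vanishes mod $2$ (rank $0$). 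Substituting into $h=k-\operatorname{rank}_{F_{2}}(GG^{T})$ yields $h=k-1,\,k-2,\,k$ for $m=1,2,\geq 3$ respectively, as claimed.

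The only genuinely delicate point is the mod-$2$ bookkeeping of $GG^{T}$: correctly discarding the even-multiplicity columns, running the parity argument that merges the cases $s$ odd and $s$ even into the single sub-sum over $V_{m}\setminus\{0\}$, and then identifying that sub-sum with the lower-dimensional Gram matrix $\mathbf{S}_{m}\mathbf{S}_{m}^{T}$ whose $F_{2}$-rank splits into the three regimes of $m$. Everything else is routine once the formula $h=k-\operatorname{rank}_{F_{2}}(GG^{T})$ and the structural equivalence of statement~2 are in hand.
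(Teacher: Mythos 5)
Your computation is correct, but it does not parallel anything in the paper: the paper states this lemma purely as a citation (part~1 to [17, Corollary~2.2], part~2 to [18,19]) and gives no argument whatsoever for the concluding hull-dimension values, so your proof is a genuine supplement rather than a variant of an existing one. The route you take is sound at every step: permutation equivalence preserves the hull, so statement~2 reduces $h([n,k,d_a])$ (a minimum over a one-element equivalence class) to $h(\mathcal{MD}_s(k,m))$; the Gram matrix is $GG^{T}=\sum_i l_i\alpha_i\alpha_i^{T}$ with $l_i=s$ on the $2^m-1$ columns supported on the first $m$ coordinates and $l_i=s+1$ elsewhere; the identity $S_kS_k^{T}=2^{k-2}(J_k+I_k)\equiv 0\pmod 2$ (valid already for $k\ge 3$, a fortiori for $k\ge 4$) correctly collapses both parities of $s$ to the single sub-sum over $V_m\setminus\{0\}$, which by the recursive construction of $S_k$ equals $\mathrm{diag}(S_mS_m^{T},\mathbf{0})$; and the $F_2$-rank of $S_mS_m^{T}$ is indeed $1,2,0$ for $m=1$, $m=2$, $m\ge 3$, giving $h=k-1,k-2,k$ via $h=k-\mathrm{rank}_{F_2}(GG^{T})$. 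What your approach buys is a short, self-contained verification of the numerical claim that the paper leaves to the reader's trust in [17]--[19]; what it does not (and does not claim to) supply is the classification content of statements~1 and~2 themselves, which remain cited. The only cosmetic caveat is that your diagonal/off-diagonal description $2^{m-1}$, $2^{m-2}$ of $S_mS_m^{T}$ needs the separate reading $[1]$ for $m=1$, which you do give.
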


For  some special  $[n,k, d_{a}]$ optimal codes, it has been shown
$h([n,k, d_{a}])$ $\geq 1$ in [15]. And $h([n,k, d_{a}])$ can also
be estimated from  extended codes or low dimension codes. Thus, we
need the following results of Ref. [15].

\begin{definition}Let $G$ be a generator matrix of $\mathcal{C}=[n, k,d]$ and
$G_{1}$ be a generator matrix of $\mathcal{C}_{1}=[n-m, k-1,\geq
d]$. Suppose $u$ is a matrix of $1$ row and $n-m$ columns. Define
$\mathbf{0}_{k-1,m}$ as the zero matrix with $k-1$ rows and $m$
columns. If
$$G=\left(\begin{array}{ccccc}
                   \mathbf{1}_{m}  & u \\
                      \mathbf{0}_{k-1,m} &G_{1}\\
                    \end{array}
                 \right),$$ \\
then $\mathcal{C}_{1}$ is called a reduced code of
$\mathcal{C}$.\end{definition}

\begin{lemma} If $\mathcal{C}_{1}$ is a reduced code of
$\mathcal{C}=[n, k,d]$ and $h(\mathcal{C}_{1})=r\geq 2$, then
  $h(\mathcal{C})\geq r-1\geq 1$ and  $\mathcal{C}$ is not an  LCD
  code.\end{lemma}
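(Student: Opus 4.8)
The plan is to translate the statement about hull dimensions into one about the ranks of the Gram matrices $GG^{T}$ and $G_{1}G_{1}^{T}$, using the identity $h(\mathcal{C})=k-\mathrm{rank}(GG^{T})$ recalled in the Introduction together with its counterpart $h(\mathcal{C}_{1})=(k-1)-\mathrm{rank}(G_{1}G_{1}^{T})=r$. Because $\mathcal{C}_{1}$ is a reduced code of $\mathcal{C}$, I may take the generator matrix of $\mathcal{C}$ in the block form of the Definition above, so the whole argument reduces to comparing $\mathrm{rank}(GG^{T})$ with $\mathrm{rank}(G_{1}G_{1}^{T})$.

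First I would compute $GG^{T}$ directly from the block form. Writing $g_{0}=(\mathbf{1}_{m},u)$ for the top row, the lower-left zero block $\mathbf{0}_{k-1,m}$ annihilates every product involving the first $m$ coordinates, so the lower-right $(k-1)\times(k-1)$ block of $GG^{T}$ is exactly $G_{1}G_{1}^{T}$; the border is $G_{1}u^{T}$ together with its transpose; and the top-left scalar is $g_{0}g_{0}^{T}=(m+w(u))\bmod 2$ (using $uu^{T}=w(u)\bmod 2$ over $F_{2}$). This displays $GG^{T}$ as the $(k-1)\times(k-1)$ matrix $G_{1}G_{1}^{T}$ bordered by one additional row and one additional column:
$$GG^{T}=\left(\begin{array}{cc} (m+w(u))\bmod 2 & uG_{1}^{T}\\ G_{1}u^{T} & G_{1}G_{1}^{T}\end{array}\right).$$

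The key step is then an elementary bound for the rank of a bordered matrix: appending one row increases the rank by at most $1$ and appending one column increases it by at most $1$, whence $\mathrm{rank}(GG^{T})\le \mathrm{rank}(G_{1}G_{1}^{T})+2$. Substituting this into the hull-dimension formula, and using $\mathrm{rank}(G_{1}G_{1}^{T})=(k-1)-r$, gives
$$h(\mathcal{C})=k-\mathrm{rank}(GG^{T})\ge k-\big[(k-1)-r\big]-2=r-1.$$
Since $r\ge 2$, this yields $h(\mathcal{C})\ge 1$, so $GG^{T}$ is singular and $\mathcal{C}$ is not an LCD code, as claimed.

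There is no serious obstacle in this argument; it is a short piece of linear algebra over $F_{2}$. The only point requiring care is the direction of the inequality: since a \emph{smaller} rank of $GG^{T}$ corresponds to a \emph{larger} hull, I must use the upper bound $\mathrm{rank}(GG^{T})\le \mathrm{rank}(G_{1}G_{1}^{T})+2$ (rather than a lower bound) to obtain a lower bound on $h(\mathcal{C})$. It is also worth noting that the exact value of the characteristic-$2$ corner entry $g_{0}g_{0}^{T}$ is irrelevant, since it affects only the single bordering row and column already accounted for by the $+2$ in the rank bound.
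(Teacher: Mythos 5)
Your argument is correct and complete. Note that the paper itself gives no proof of this lemma --- it is imported verbatim from Ref.~[15] as a known result --- so there is nothing internal to compare against; your bordered-Gram-matrix computation is the natural way to supply the missing proof. The two points that need care are exactly the ones you handle: the identity $h(\mathcal{C})=k-\mathrm{rank}(GG^{T})$ requires $G$ to have full row rank $k$, which holds here because the top row $(\mathbf{1}_{m},u)$ with $m\ge 1$ is supported on coordinates where the block $\mathbf{0}_{k-1,m}$ vanishes; and the direction of the rank inequality, where the upper bound $\mathrm{rank}(GG^{T})\le \mathrm{rank}(G_{1}G_{1}^{T})+2$ (one extra row plus one extra column) is what converts into the lower bound $h(\mathcal{C})\ge r-1$. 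Since $r\ge 2$ forces $GG^{T}$ to be singular, the non-LCD conclusion follows from Massey's criterion quoted in the Introduction.
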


\begin{lemma}\label{lem3}If $d$ is odd,  $\mathcal{C}^{e}$ is  an
extended code of $\mathcal{C}=[n, k,d]$ and
$h(\mathcal{C}^{e})=r\geq 2$, then $h(\mathcal{C})\geq r-1\geq 1$
and  $\mathcal{C}$ is not an LCD code.\end{lemma}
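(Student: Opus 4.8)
The plan is to reduce the statement to a rank-one perturbation estimate for the Gram matrix $GG^{T}$, using the characterization $h(\mathcal{C})=k-\mathrm{rank}(GG^{T})$ recorded in Section 1.

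First I would fix a generator matrix $G=G_{k\times n}$ of $\mathcal{C}$ and write down a generator matrix of the extended code explicitly. Taking $\mathcal{C}^{e}$ to be the overall parity extension, each row of $G$ acquires one new coordinate equal to the parity of that row, so a generator matrix of $\mathcal{C}^{e}$ is $G^{e}=(G\mid v)$, where $v=G\mathbf{1}_{n}^{T}$ is the $k\times 1$ column of row parities. Since the parity map is injective on the first $n$ coordinates, $\mathcal{C}^{e}$ has dimension $k$; and because $d$ is odd, every minimum-weight codeword of $\mathcal{C}$ has odd weight and hence receives a $1$ in the new coordinate, so $\mathcal{C}^{e}$ is a genuine $[n+1,k,d+1]$ code. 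This is the reason the oddness hypothesis is stated, although it plays no role in the rank count below.

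The key step is the Gram-matrix identity $G^{e}(G^{e})^{T}=GG^{T}+vv^{T}$, obtained by block multiplication of $(G\mid v)$ with $\binom{G^{T}}{v^{T}}$. Since $v$ is a single column, the perturbation $vv^{T}$ has rank at most $1$. Working over $F_{2}$, where $-vv^{T}=vv^{T}$, the identity rearranges to $GG^{T}=G^{e}(G^{e})^{T}+vv^{T}$, so subadditivity of rank gives
$$\mathrm{rank}(GG^{T})\leq \mathrm{rank}\big(G^{e}(G^{e})^{T}\big)+\mathrm{rank}(vv^{T})\leq \mathrm{rank}\big(G^{e}(G^{e})^{T}\big)+1 .$$

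Finally I would translate the rank bound back into hull dimensions. Using $h(\mathcal{C})=k-\mathrm{rank}(GG^{T})$ together with $h(\mathcal{C}^{e})=k-\mathrm{rank}(G^{e}(G^{e})^{T})=r$, the displayed inequality yields $h(\mathcal{C})=k-\mathrm{rank}(GG^{T})\geq k-\mathrm{rank}(G^{e}(G^{e})^{T})-1=r-1\geq 1$, whence $Hu(\mathcal{C})\neq\{0\}$ and $\mathcal{C}$ is not an LCD code. I do not expect a serious obstacle: the only point requiring care is verifying the Gram-matrix identity and that $vv^{T}$ contributes rank at most one, after which the conclusion is immediate. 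The argument is the exact analogue, for extension rather than reduction, of the preceding reduced-code lemma, where the same rank-one comparison is drawn between $GG^{T}$ and $G_{1}G_{1}^{T}$.
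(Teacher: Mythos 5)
Your argument is correct. The paper itself states this lemma without proof, quoting it (together with the reduced-code lemma preceding it) from Ref.\ [15], so there is no in-paper argument to compare against; judged on its own, your proof is sound. The generator matrix of the parity extension is indeed $G^{e}=(G\mid v)$ with $v=G\mathbf{1}_{n}^{T}$, the block computation $G^{e}(G^{e})^{T}=GG^{T}+vv^{T}$ is right, and over $F_{2}$ the rank-one perturbation together with subadditivity of rank gives $\mathrm{rank}(GG^{T})\leq \mathrm{rank}\bigl(G^{e}(G^{e})^{T}\bigr)+1$, hence $h(\mathcal{C})\geq h(\mathcal{C}^{e})-1=r-1\geq 1$ via the identity $h(\mathcal{C})=k-\mathrm{rank}(GG^{T})$ recorded in Section~1. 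Your side remark is also accurate: the oddness of $d$ is not needed for the rank count and only guarantees that $\mathcal{C}^{e}$ is a genuine $[n+1,k,d+1]$ code, which is what makes the lemma usable in the applications. This is the natural extension-code analogue of the reduced-code Lemma~2, as you observe.
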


\section{The  proof of  Theorem \ref{the1.1}}

In this section, Theorem \ref{the1.1}  will be proved  by showing
$h([31s+t,5, d_{a}])\geq 1$ for $t\in \{2, 8,10,12,14, 16,18\}$ and
$h([31s+t,5, d_{a}-1])\geq 1$ for $t=16$. Our discussions are
presented  in four subsections. The first subsection verifies
$h([31s+t,5, d_{a}])\geq 1$ for $t \in \{2, 8,12,16\}$, while the
other subsections prove $h([31s+t,5, d_{a}])\geq 1$ for $t=10,14$
and $18$, respectively.

\subsection{\bf  $h([32s+2,5, d_{a}])\geq 1$  and  $h([32s+t ,5, d_{a}])\geq 2$ for $t=8, 12, 16$}

\begin{lemma} If $s\geq 1$, a $[31s+2,5,16s]$ code has $h\geq
1$ and a $[31s+9,5,16s+4]$ code has $h\geq 3$,
 hence they are not LCD codes.\end{lemma}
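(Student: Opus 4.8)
The plan is to reduce both claims to a rank computation over $F_2$ for a code of bounded length. Recall that $\mathcal{C}$ is LCD iff $GG^T$ is invertible and that $h(\mathcal{C})=5-\mathrm{rank}_{F_2}(GG^T)$. For $G=(l_1\alpha_1,\ldots,l_{31}\alpha_{31})$ one has $GG^T=\sum_i l_i\alpha_i\alpha_i^T\equiv\sum_{i:\,l_i\text{ odd}}\alpha_i\alpha_i^T \pmod 2$, so only the parities of the $l_i$ matter. Since each row of $S_5$ has weight $16$ and two distinct rows meet in $8$ positions, $S_5 S_5^T\equiv 0\pmod 2$; hence for a code with generator $((s-c)\mathbf{S}_5\mid B)$ we get $GG^T=(s-c)S_5S_5^T+BB^T\equiv BB^T\pmod 2$, so that $h(\mathcal{C})=5-\mathrm{rank}_{F_2}(BB^T)$. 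By the Proposition and the Notation above, $c\le 2$ and $B$ has $31c+t$ columns; thus the two claims become statements about codes of length at most $64$ (for $t=2$) and $71$ (for $t=9$), \emph{independent of $s$}.

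First I would compute $\sigma=16n-31d$, obtaining $\sigma=32$ for $[31s+2,5,16s]$ and $\sigma=20$ for $[31s+9,5,16s+4]$. Writing $l_i=s+e_i$ with $e=(e_1,\ldots,e_{31})$, the identity $W^T=P_5L^T=16s\,\mathbf{1}_{31}^T+P_5e^T$ gives $w_a=16s+\sum_j(P_5)_{aj}e_j$, so optimality is the linear system $\sum_j(P_5)_{aj}e_j\ge d-16s$ for every $a$ (with equality for some $a$), together with $\sum_j e_j=n-31s$, i.e.\ $2$, resp.\ $9$. I would then determine all admissible $e$ from the integrality and nonnegativity of the solution of $(\star)$, using the bounds on $l_{max}$ and $l_{min}$ of Ref.~[18] to confine the spread of $e$ to a finite range. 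A useful elementary observation is that $|S'|\equiv\sum_j e_j\pmod 2$, where $S'=\{j:e_j\text{ odd}\}$, so $|S'|$ is even in the first case and odd in the second, and $GG^T\equiv\tilde C\tilde C^T\pmod 2$ with $\tilde C$ the $5\times|S'|$ matrix whose columns are the $\alpha_j$, $j\in S'$.

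To finish the first code, the constraints with $\sigma=32$ bound $|S'|$ and keep the odd-support columns from spanning an LCD configuration; running through the finitely many admissible parity patterns (up to the action of $GL(5,2)$ permuting the columns of $S_5$) one verifies $\mathrm{rank}(\tilde C\tilde C^T)\le 4$, that is $h\ge 1$. For the second code the tighter budget $\sigma=20$ with $\sum_j e_j=9$ forces every admissible $e$ to be supported so that $\mathrm{rank}(\tilde C\tilde C^T)\le 2$ — either $|S'|=1$, or a larger odd support whose columns are linearly dependent in the relevant way — whence $h=5-\mathrm{rank}(\tilde C\tilde C^T)\ge 3$. In both cases $h\ge 1$, so neither code is LCD; the stronger bound $h\ge 3$ is recorded because, via the extension Lemma, it yields $h\ge 2$ for the neighbouring $[31s+8,5,16s+3]$ code.

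The hard part will be the classification for $[31s+9,5,16s+4]$: establishing $h\ge 3$ means ruling out every admissible defining vector whose odd-support columns produce rank exceeding $2$, a genuine rigidity statement that the parity of $|S'|$ alone cannot deliver (for instance $|S'|=3$ with independent columns would already give $h=2$). Here the reduction to the bounded block $B$ is essential, since it turns the claim into a finite, $s$-independent enumeration of the nonnegative integer solutions of $(\star)$ — checkable in MATLAB — while the sharp estimates on $l_{max}$ and $l_{min}$ are what keep that enumeration short enough to complete and to combine with the rank computation.
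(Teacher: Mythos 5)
Your proposal is a computational strategy outline, not a proof: the two assertions that actually constitute the lemma --- that every admissible parity pattern for $[31s+2,5,16s]$ gives $\mathrm{rank}(\tilde C\tilde C^T)\le 4$, and that every admissible pattern for $[31s+9,5,16s+4]$ gives $\mathrm{rank}(\tilde C\tilde C^T)\le 2$ --- are stated but never established. You acknowledge this yourself (``the hard part will be the classification''), but that hard part is the entire content of the statement; as written, nothing rules out, say, an admissible defining vector for the second code whose odd-support columns have $\tilde C\tilde C^T$ of rank $3$ or more, which would destroy the bound $h\ge 3$. The preparatory reductions you make (only parities of the $l_i$ matter, $S_5S_5^T\equiv 0 \pmod 2$, $\sigma=32$ resp.\ $20$, the parity of $|S'|$) are all correct, but they only set up a finite search that you then do not carry out, so there is a genuine gap.

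The paper's own proof avoids any enumeration and is purely structural, via reduced codes. A $[31s+9,5,16s+4]$ code has a reduced code $[30s+8,4,16s+4]=[15\cdot 2s+8,\,4,\,8\cdot 2s+4]$, which by the classification in Lemma~1(2) is equivalent to the juxtaposition of $2s$ simplex codes $\mathcal{S}_4$ with the MacDonald code $\mathcal{MD}_{4,3}$ and hence is self-orthogonal ($h=4$); Lemma~2 then gives $h\ge 3$ for the original code. A $[31s+2,5,16s]$ code has a reduced code $[30s+1,4,16s]$, which in turn has a reduced code $[28s,3,16s]=4s\,[7,3,4]$, a self-orthogonal code with $h=3$; applying Lemma~2 twice gives $h\ge 1$. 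If you want to salvage your approach you must actually perform and report the finite enumeration of defining vectors (as the paper does in its later, genuinely computational lemmas), or else invoke the reduced-code mechanism to replace the enumeration by the known uniqueness of codes meeting the simplex/MacDonald parameters.
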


\begin{proof} A $[31s+2,5,16s]$ code has a reduced code
$[30s+1,4,16s]$, this reduced code can  give a reduced code
$[28s,3,16s]=[7\times 4s,3, 4\times 4s]$, which is an SO code. Thus,
a $[31s+2,5,16s]$ code has $h([31s+2,5,16s])\geq 1$.

A $[31s+9,5,16s+4]$ code has  a reduced code
$[30s+8,4,16s+4]$$=[15\times2s+8,4,8\times2s+4]$, which is an SO
code. Thus, $h([31s+9,5,16s+4])\geq 3$ and the lemma
holds.\end{proof}

 In the rest of this section, we will use some results of
Section 2 to calculate $h(\mathcal{C})$ for each  code $\mathcal{C}$
$=[n,5,d_{a}]$. From now on, we fix $k=5$ and $N=31$, let
$L=(l_{1},l_{2},\cdots,l_{N})$ be a  defining  vector of a given
$[n,5,d_{a}]$ code, and let  $l_{max}=max_{1\leq i\leq N}\{ l_{i}
\}$, $l_{min}=min_{1\leq i\leq N}\{  l_{i} \}$.  For clarity, the
following example is given to show the process of finding $L$ and
calculating $h([n,5,d_{a}])$.

\begin{example}  Let $s\geq 1$, $\mathcal{C}$
$=[31s+13,5,16s+6]$ be an optimal code. One can check $\sigma=
2^{4}+6$ and $s-1\leq l_{i}\leq s+1$ for defining vector
$L=(l_{1},l_{2},\cdots,l_{N})$ of $\mathcal{C}$. According to [16],
there is no $[13,5,6]$ code, thus $l_{max}=s+1$ and $l_{min}=s-1$.
Hence, $L=(s-1){\bf 1_{N}}+L'$, where $L'$ is a defining vector  of
a $[44,5, 22]$ code. We can assume the type of $L'$ is $]]
(0)_{a}\mid (1)_{b}\mid (2)_{c}]]$, where $a\geq 1$, $a+b+c=31$ and
$b+2c=44$. From the system of linear equations ($\star$), one can
obtain

$$(L')^{T}=\frac{1}{16}[12\cdot
{\bf1^{T}_{2^{k}-1}}-2Q_{k}\Lambda^{T}]\ \ \ \ (\star').$$

By solving the  system of linear equations ($\star'$), we get all
possible  $L'$ and $L$. There are totally 4805 solutions of
($\star'$), these $(L')$'s can be divided into
 two groups, one group has 3720 solutions, and the other has 1085 solutions.
 Using Magma [23], one can check that all the $(L')$'s in the same group
 give  equivalent codes. Hence there are altogether two inequivalent $[31s+13,5,16s+6]$
 codes. Much more details of $h([31s+13,5,16s+6])$ and  weight enumerators of
inequivalent $[31s+13,5,16s+6]$ codes are given in  the following
lemma. \end{example}

\begin{lemma} If $s\geq 1$, then a $[31s+13,5,16s+6]$ and a
$[31s+17,5,16s+8]$ codes all have $h\geq 3$.\end{lemma}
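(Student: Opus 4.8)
The plan is to collapse each family to one finite, $s$-free computation, then classify the resulting codes and read off their hull dimensions. For $\mathcal{C}=[31s+13,5,16s+6]$ one has $\sigma=2^{4}+6=22$ and for $\mathcal{C}=[31s+17,5,16s+8]$ one has $\sigma=2^{4}+8=24$, both independent of $s$. Following Ref. [18] I would first bound the entries of the defining vector $L=(l_{1},\dots,l_{N})$ as $s-1\le l_{i}\le s+1$, so $l_{\min}\ge s-1$; writing $L=(s-1)\mathbf{1}_{N}+L'$ with $l'_{i}=l_{i}-(s-1)\in\{0,1,2\}$ exhibits $\mathcal{C}$ as the juxtaposition of $(s-1)\mathbf{S}_{5}$ with a fixed-length code $\mathcal{C}'$ of defining vector $L'$ — a $[44,5,22]$ code for the first family (the set-up of Example 1) and a $[48,5,24]$ code for the second, with $\sum_{i} l'_{i}=t+31$ free of $s$. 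I would note at the outset that a reduced-code argument alone, which sufficed for $[31s+9,5,16s+4]$ via the MacDonald code $\mathcal{MD}_{2s}(4,3)$ of full hull $4$, is too weak here: the analogous reduced code $[30s+12,4,16s+6]=\mathcal{MD}_{2s}(4,2)$ has hull only $2$ and yields merely $h(\mathcal{C})\ge 1$, which is why the finer classification is necessary.

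The crux is that the splitting $L=(s-1)\mathbf{1}_{N}+L'$ preserves the hull dimension. If $G=((s-1)\mathbf{S}_{5}\mid G')$ generates $\mathcal{C}$, then $GG^{T}=(s-1)\mathbf{S}_{5}\mathbf{S}_{5}^{T}+G'(G')^{T}$, and since $s\mathbf{S}_{k}$ is self-orthogonal for $k\ge 4$ (Section 2) we have $\mathbf{S}_{5}\mathbf{S}_{5}^{T}=\mathbf{0}$ over $F_{2}$; hence $GG^{T}=G'(G')^{T}$ and $h(\mathcal{C})=5-\mathrm{rank}(GG^{T})=h(\mathcal{C}')$ for every $s\ge 1$. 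Moreover, reducing $G'(G')^{T}=\sum_{i}l'_{i}\alpha_{i}\alpha_{i}^{T}$ modulo $2$ leaves only the columns with $l'_{i}$ odd (here $l'_{i}=1$), so $G'(G')^{T}=MM^{T}$ is the Gram matrix of the submatrix $M$ gathering those columns. Proving $h(\mathcal{C})\ge 3$ is therefore exactly proving $\mathrm{rank}(MM^{T})\le 2$ for the fixed code $\mathcal{C}'$, a problem now finite and independent of $s$.

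Next I would determine all admissible $L'$. Taking $L'$ of type $]](0)_{a}\mid(1)_{b}\mid(2)_{c}]]$ with $a+b+c=31$ and $b+2c=44$ (resp. $48$), I would solve the system $(\star')$ for all nonnegative integer solutions in MATLAB, as in Example 1 where the $[44,5,22]$ case gives $4805$ solutions in two equivalence classes. Using Magma I would group the solutions into inequivalent codes and, for each representative, compute $\mathrm{rank}(G'(G')^{T})$ (equivalently $\mathrm{rank}(MM^{T})$) together with the weight enumerator; confirming $\mathrm{rank}\le 2$, i.e. $h\ge 3$, for every class proves the lemma and records the enumerators for the extended-code arguments that follow (here $[31s+13,\cdot]$ and $[31s+17,\cdot]$ are the extensions of the odd-distance codes $[31s+12,5,16s+5]$ and $[31s+16,5,16s+7]$).

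The main obstacle is this final verification rather than the reduction: $(\star')$ has thousands of nonnegative solutions, so the real work is to enumerate them exhaustively, certify that the partition into equivalence classes is complete, and check that $\mathrm{rank}(MM^{T})\le 2$ holds \emph{uniformly} — a single surviving class with $\mathrm{rank}(MM^{T})=3$ would already break the lemma. A secondary point, absent in Example 1, is the lower bound $l_{i}\ge s-1$ for the $[31s+17,5,16s+8]$ family: unlike the $[13,5,6]$ obstruction there, a reduced code $[17,5,8]$ does exist, so I cannot force $l_{\min}=s-1$ by non-existence and must instead derive the bound directly from $(\star)$. Once it holds, reducing by the fixed amount $(s-1)\mathbf{S}_{5}$ applies uniformly and the identity $h(\mathcal{C})=h(\mathcal{C}')$ disposes of all cases at once.
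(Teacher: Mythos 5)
Your treatment of the $[31s+13,5,16s+6]$ family follows the paper's route: the same bound $s-1\le l_i\le s+1$ (with $l_{\min}=s$ ruled out by the non-existence of a $[13,5,6]$ code), the same reduction to classifying $[44,5,22]$ codes via $(\star')$, and the same computer classification into equivalence classes with $h=3$. Your explicit justification that $GG^{T}=G'(G')^{T}$ because $\mathbf{S}_{5}\mathbf{S}_{5}^{T}=\mathbf{0}$, so that $h(\mathcal{C})$ depends only on $L'$, is correct and is a cleaner statement of what the paper leaves implicit; likewise your observation that the reduced code $[30s+12,4,16s+6]=\mathcal{MD}_{2s}(4,2)$ has hull only $2$ and cannot give $h\ge 3$ is accurate.

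There is, however, a genuine gap in the $[31s+17,5,16s+8]$ case: your claimed bound $l_i\le s+1$ is false there. From $(\star)$ one has $l_i=\frac{1}{16}\bigl[(d+\sigma)-2q_i\Lambda^{T}\bigr]$ with $d+\sigma=16s+8+24=16(s+2)$, so $l_{\max}=s+2$ is attainable, and the correct range is $s-1\le l_i\le s+2$. Consequently your enumeration of $L'$ restricted to entries in $\{0,1,2\}$ (type $]](0)_a\mid(1)_b\mid(2)_c]]$) is incomplete: a code with $l_{\max}=s+2$ has $l_i'=3$ for some $i$ and is never examined. The paper disposes of this case separately and cheaply: when $l_{\max}=s+2$ the code has a reduced code $[30s+15,4,16s+8]=[15(2s+1),4,8(2s+1)]$, which is equivalent to $(2s+1)\mathbf{S}_4$ and hence self-orthogonal with hull dimension $4$, so $h(\mathcal{C})\ge 3$ by the reduced-code lemma. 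You would need to add this (or some substitute) before the finite computation covers all cases. Your closing worry is also aimed at the wrong end: the lower bound $l_i\ge s-1$ does follow from $(\star)$ since $q_i\Lambda^{T}\le\sigma=24$, and the subcase $l_{\min}=s$ is harmless — it is just $a=0$ in your enumeration (the paper handles it via the unique projective $[17,5,8]$ code, which has $h=4$). It is the upper end, $l_{\max}=s+2$, that your argument fails to close.
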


\begin{proof} {\it Case 1.} Let $n=31s+13$,  $d=16s+6$, $\mathcal{C}$
$=[n,5,d]$ and $L=(l_{1},l_{2},\cdots,l_{N})$ be a defining  vector
of $\mathcal{C}$. Then one can check $\sigma= 2^{4}+6$ and $s-1\leq
l_{i}\leq s+1$ for $1\leq i\leq N$.
 Since there is no $[13,5,6]$ code, thus the defining vector $L$  may have $l_{max}=s+1$
and $l_{min}=s-1$, which implies  $L=(s-1){\bf 1_{N}}+L'$, where
$L'$ is a defining vector  of  a  $[44,5, 22]$ code. In this case
$\mathcal{C}$ is  the juxtaposition of $(s-1)\mathcal{S}$$_{5}$ and
a $[44,5, 22]$ code. Suppose $L$ is of type $]] (s-1)_{a}\mid
(s)_{b} \mid(s+1)_{c}]]$ with $a\geq 1$. By solving the  system of
linear equations ($\star$), one can obtain that $L'$ is one of the
following two types $]] (0)_{a}\mid (1)_{b}\mid (2)_{c}]]$:

 $L'_{1}$: $]] (0)_{1}\mid (1)_{16} \mid(2)_{14}]]$;
 $L'_{2}$: $]] (0)_{3}\mid (1)_{12}\mid (2)_{16}]]$.

There are 3720  solutions $(L')$'s that  are of type $L'_{1}$, all
these 3720 defining  vectors give equivalent $[44,5, 22]$ codes,
they are equivalent  to a code with defining vector $L'_{1,1}$,
where
 $L'_{1,1}$ $=(11111 0 11111 1111 22222 22222 212122)$.
One can check the corresponding code $\mathcal{C}$ has
$h=h(\mathcal{C})=3$ and weight enumerator
 $1+23y^{16s+6}  +7y^{16s+8}+y^{16s+14}.$

 There are  1085  solutions $(L')$'s that  are of type $L'_{2}$, all these
1085 defining  vectors give equivalent $[44,5, 22]$ codes, they are
equivalent  to a code with defining vector $L'_{2,1}$, where
 $L'_{2,1}$ $=( 11111 0 111101011  22222 22222 222222)$.\\
One can check  the corresponding code $\mathcal{C}$ has
$h=h(\mathcal{C})=3$ and weight enumerator  $1+24y^{16s+6}
+6y^{16s+8}+y^{16s+16}.$

Summarizing previous discussions, we have $h([31s+13,5,16s+6])=3$
and
 $\mathcal{C}$ is not an LCD code.

{\it Case 2.} Let $n=31s+17$ and $d=16s+8$, $\mathcal{D}$ $=[n,5,d]$
and $L=(l_{1},l_{2},\cdots,l_{N})$ be a defining vector of
$\mathcal{D}$. It is easy to check $\sigma= 2^{4}+8$ and $s-1\leq
l_{i}\leq s+2$ for $1\leq i\leq N$. Thus the defining vector $L$ of
$\mathcal{D}$ may be one of the following types:

 (1) $l_{max}=s+2$; (2) $l_{max}=s+1$ and $l_{min}=s$;

 (3) $l_{max}=s+1$ and $l_{min}=s-1$.

If $l_{max}=s+2$, then $\mathcal{D}$ has a reduced code
$[30s+15,4,16s+8]$ $=[15m,4,8m]$ where $m=2s+1$, which is an SO
code, thus one can deduce that $h(\mathcal{D})\geq 3$.

If $l_{max}=s+1$ and $l_{min}=s$, then $L=s{\bf 1_{N}}+L_{0}$, where
$L_{0}$ is a defining  vector  of  a  projective $[17,5, 8]$ code.
In this case $\mathcal{D}$ is  the juxtaposition of
$s\mathcal{S}$$_{5}$ and a  projective $[17,5, 8]$ code. According
to Ref. [21], an  $[17,5, 8]$ code is unique and its $h=4$.

If $l_{max}=s+1$ and $l_{min}=s-1$, then $L=(s-1){\bf 1_{N}}+L'$,
where $L'$ is a defining vector  of  a $[48,5, 24]$ code.  In this
case $\mathcal{D}$ is the juxtaposition of $(s-1)\mathcal{S}$$_{5}$
and a $[48,5, 24]$ code.  Suppose $L$ is of type $]] (s-1)_{a} \mid
(s)_{b} \mid (s+1)_{c}]]$ with $a\geq 1$. By solving the  system of
linear equations ($\star$), we obtain the following types   $]]
(0)_{a}\mid (1)_{b}\mid (2)_{c}]]$ of $L'$:

 $L'_{1}$: $]] (0)_{1}\mid (1)_{12}
\mid(2)_{18}]]$;
 $L'_{2}$: $]] (0)_{3}\mid (1)_{8}\mid (2)_{20}]]$;
$L'_{3}$: $]] (0)_{7}\mid (1)_{0}\mid (2)_{24}]]$.

There are altogether two classes of inequivalent  $[48,5, 24]$ codes
with defining vector of type $]](0)_{1}\mid (1)_{12}
\mid(2)_{18}]]$. Denote their defining vectors as $L'_{1,i}$
($i=1,2$), respectively. Then the corresponding codes $\mathcal{D}$
have $h$
and   weight  enumerators  as follows:\\
$L'_{1,1}$$=(220111121121221 2222222221122112 )$,
$h=5$, $1+24y^{16s+8}  +6y^{16s+12};$\\
$L'_{1,2 }$$=(22021122 1121221 22222 11221121221 )$, $h=3$,
$1+24y^{16s+8}  +8y^{16s+10}+y^{16s+16}.$

There are a class of   $[48,5, 24]$ code with defining vector of
type $]](0)_{3}\mid (1)_{8}\mid (2)_{20}]]$ and a class of   $[48,5,
24]$ code with defining vector of type $]] (0)_{7}\mid (1)_{0}\mid
(2)_{24}]]$, respectively. Denote their defining vector as $L'_{j}$
($j=3,4$). Then the corresponding codes $\mathcal{D}$ have $h$ and
  weight  enumerators as follows:\\
 $L'_{3 }$$=(220200 221121221
222222222  1121221 )$, $h=5$, $1+26y^{16s+8}
+4y^{16s+12}+y^{16s+16};$\\
$L'_{4 }$$=(220200   220020220   22222222  22222222 )$, $h=5$,
$1+28y^{16s+8} +3y^{16s+16}.$

 Summarizing previous discussions, we have $h([31s+17,5,16s+8])=3$.
\end{proof}

From the previous two lemmas and Lemma \ref{lem3}, one can derive
the following conclusion.

\begin{lemma}  The codes
$[31s+8,5,16s+3]$, $[31s+12,5,16s+5]$ and $[31s+16,5,16s+7]$ all
have $h([31s+t,5,d_{a}])\geq 2$, hence they are not LCD
codes.\end{lemma}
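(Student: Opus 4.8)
The plan is to obtain each of the three bounds by passing to the parity-check extension and invoking Lemma \ref{lem3}, which converts a hull-dimension bound for $\mathcal{C}^{e}$ into one for $\mathcal{C}$ whenever $\mathcal{C}$ has odd minimum distance. The starting observation is that every code in the statement does have odd distance: since $16s$ is even, the distances $16s+3$, $16s+5$ and $16s+7$ are all odd. Hence for each such $\mathcal{C}=[n,5,d]$ the extended code $\mathcal{C}^{e}$ is available and the hypothesis of Lemma \ref{lem3} can be checked.

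First I would identify $\mathcal{C}^{e}$ in each case. Appending the overall parity bit is a linear injection on codewords, so the dimension remains $5$, the length grows by one, and — because $d$ is odd — every minimum-weight (odd) codeword acquires weight $d+1$ while all even weights (which are already at least $d+1$) are unchanged; thus $\mathcal{C}^{e}=[n+1,5,d+1]$. Concretely this gives
\begin{gather*}
[31s+8,5,16s+3]^{e}=[31s+9,5,16s+4],\\
[31s+12,5,16s+5]^{e}=[31s+13,5,16s+6],\\
[31s+16,5,16s+7]^{e}=[31s+17,5,16s+8].
\end{gather*}
These three extended codes are precisely the even-distance codes whose hull dimensions were computed in the previous two lemmas.

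The key step is then to quote those lemmas: $[31s+9,5,16s+4]$ has $h\geq 3$, while $[31s+13,5,16s+6]$ and $[31s+17,5,16s+8]$ each have $h=3$. Applying Lemma \ref{lem3} with $r\geq 3$ yields
$$h(\mathcal{C})\geq r-1\geq 2$$
for each of the three original codes. Since $h\geq 2>0$, none of them is an LCD code, which is the assertion of the lemma.

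There is no genuine obstacle here beyond bookkeeping. The only points requiring care are verifying that the parity-check extension of each odd-distance code matches exactly the code treated earlier (so that length, dimension and distance all agree), and confirming that those extended codes satisfy $r\geq 2$ so that Lemma \ref{lem3} delivers the stronger conclusion $h(\mathcal{C})\geq 2$ rather than merely $h(\mathcal{C})\geq 1$. Both are immediate from the displayed identifications together with the previously established values $h\geq 3$.
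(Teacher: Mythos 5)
Your proposal is correct and is exactly the argument the paper intends: the paper's own justification is the one-line remark that the lemma follows ``from the previous two lemmas and Lemma \ref{lem3},'' and you have simply filled in the bookkeeping of identifying the parity-check extensions $[31s+9,5,16s+4]$, $[31s+13,5,16s+6]$, $[31s+17,5,16s+8]$ and applying Lemma \ref{lem3} with $r\geq 3$ to get $h(\mathcal{C})\geq 2$.
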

 Combining with known results on  $[n,5]$ LCD codes of lengths
 $n=8,9,12,$ $13, 16,33$, we can obtain that
 $[31s+t,5,16s+d_{t}]$ are optimal LCD codes, where
 $d_{t}=-1,2,3,4,5,6$ for $t=2,8,9, 12,13,16$, respectively.

Thus  Theorem \ref{the1.1} holds for the cases of  $t=2,8, 12,16$.

 \subsection{\bf  $h([31s+10,5, 16s+4]) \geq 1$}

 In this subsection,  let $n=31s+10$ and $d=16s+4$, $\mathcal{C}$ $=[n,5,d]$
 and $L=(l_{1},l_{2},\cdots,l_{N})$ be a
defining vector of $\mathcal{C}$. It is easy to check for this code,
$\sigma=2\times 2^{4}+4$ and $s-2\leq l_{i}\leq s+2$ for $1\leq
i\leq N$. Thus the defining vector $L$ of $\mathcal{C}$ may be one
of the following types:

 (1) $l_{max}=s+2$; (2) $l_{max}=s+1$ and $l_{min}=s$;

 (3) $l_{max}=s+1$ and $l_{min}=s-1$; (4) $l_{max}=s+1$ and $l_{min}=s-2$.

If $l_{max}=s+2$, then $\mathcal{C}$ has a reduced code
$[30s+8,4,16s+4]$, which is an SO code. Thus, in this case one can
deduce that $h(\mathcal{C})\geq 3$ and  $\mathcal{C}$ is not an LCD
code.

If $l_{max}=s+1$ and $l_{min}=s$, then $L=s{\bf 1_{N}}+L_{0}$, where
$L_{0}$ is a defining  vector  of  a  projective $[10,5, 4]$ code.
In this case $\mathcal{C}$ is  the juxtaposition of
$s\mathcal{S}$$_{5}$ and a $[10,5, 4]$ code. According to Ref. [10],
a $[10,5, 4]$ code is not an LCD code, hence $\mathcal{C}$ is not an
LCD code either.

For verifying the cases (3) and (4),  two additional lemmas to
determine $h(\mathcal{C})$ are provided as follows.

\begin{lemma} If the defining vector $L=(l_{1},l_{2},\cdots,l_{N})$
of $\mathcal{C}$ satisfies  $l_{max}=s+1$ and $l_{min}=s-1$, then
$h(\mathcal{C})\geq 1$ and  $\mathcal{C}$ is not an LCD
code.\end{lemma}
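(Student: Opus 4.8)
The plan is to exploit the juxtaposition structure forced by $l_{max}=s+1$ and $l_{min}=s-1$. Writing $L=(s-1)\mathbf{1}_{N}+L'$, the entries of $L'$ lie in $\{0,1,2\}$ with at least one $0$ and at least one $2$, and $\sum_{i} l'_{i}=n-31(s-1)=41$. Hence $\mathcal{C}$ is the juxtaposition of $(s-1)\mathcal{S}_{5}$ with a $[41,5,20]$ code $\mathcal{D}'$ whose defining vector is $L'$ (its distance being $d-16(s-1)=20$). First I would reduce the hull computation to $\mathcal{D}'$ alone: if $G=((s-1)S_{5}\mid G')$ then $GG^{T}=(s-1)S_{5}S_{5}^{T}+G'G'^{T}$, and since every nonzero codeword of $\mathcal{S}_{5}$ has weight $16$ and any two meet in $2^{3}=8$ positions, $S_{5}S_{5}^{T}$ has only even entries, so $(s-1)S_{5}S_{5}^{T}\equiv 0\pmod 2$. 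Thus $GG^{T}\equiv G'G'^{T}\pmod 2$ and $h(\mathcal{C})=5-\mathrm{rank}_{F_{2}}(G'G'^{T})=h(\mathcal{D}')$, independently of $s$. It therefore suffices to prove $h(\mathcal{D}')\geq 1$ for every $[41,5,20]$ code arising this way.

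Next I would enumerate the admissible $L'$. Any such $L'$ has type $]](0)_{a}\mid(1)_{b}\mid(2)_{c}]]$ with $a+b+c=31$ and $b+2c=41$; together with $a\geq 1$ and $c\geq 1$ this forces $c=a+10$ and $b=21-2a$, so $a$ ranges over $\{1,2,\ldots,10\}$. For each type I would solve the integer system $(\star)$ in reduced form $(L')^{T}=\tfrac{1}{16}[\,56\cdot\mathbf{1}^{T}_{31}-2Q_{5}\Lambda^{T}\,]$, where $d'+\sigma=20+36=56$ and $\Lambda\geq 0$ has at least one zero coordinate (so that the minimum weight equals exactly $20$), discarding any solution whose minimum weight exceeds $20$. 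As in the preceding example and lemmas, this yields finitely many explicit defining vectors $L'$, which I would then group by type.

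Finally, using Magma I would sort these defining vectors into equivalence classes of $[41,5,20]$ codes, pick one representative per class, and compute $h$ (equivalently $5-\mathrm{rank}_{F_{2}}(G'G'^{T})$) together with the weight enumerator for each, exactly as was done for the $[44,5,22]$ and $[48,5,24]$ reduced codes above; the lemma follows once every representative satisfies $h\geq 1$. I expect the main obstacle to be twofold. First, the enumeration of the solutions of the reduced system and their classification is a sizeable finite computation (on the scale of the thousands of solutions met earlier), and care is needed to certify that all types $a=1,\ldots,10$ and all inequivalence classes have been captured. Second, and more essentially, one cannot here fall back on the shortcut available when $l_{max}=s+2$ — namely exhibiting a self-orthogonal $[\,\cdot\,,4]$ reduced code and invoking the reduced-code lemma to obtain $h\geq 3$ — because this sub-case genuinely contains codes with hull dimension exactly $1$. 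This is precisely why the statement asserts only $h(\mathcal{C})\geq 1$, and why a complete classification, rather than a single reduced-code argument, is required.
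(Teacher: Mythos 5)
Your proposal follows essentially the same route as the paper: write $L=(s-1)\mathbf{1}_{N}+L'$, recognize $\mathcal{C}$ as the juxtaposition of $(s-1)\mathcal{S}_{5}$ with a $[41,5,20]$ code, enumerate the admissible types $]](0)_{a}\mid(1)_{b}\mid(2)_{c}]]$ by solving the system $(\star)$, classify the resulting codes up to equivalence by computer, and check $h\geq 1$ for each representative (the paper finds nineteen inequivalent classes, listed in Table~\ref{tab2}, only for $a=1,\dots,7$ since the remaining candidate types admit no solutions). Your explicit observation that $S_{5}S_{5}^{T}\equiv 0\pmod 2$, so that $h(\mathcal{C})=h(\mathcal{D}')$ independently of $s$, is a correct and welcome justification of a step the paper leaves implicit, and your remark that no self-orthogonal reduced-code shortcut is available here (because classes with $h=1$ genuinely occur) matches the paper's situation.
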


 \begin{proof} If $l_{max}=s+1$ and $l_{min}=s-1$,
then $s\geq
 1$ and $L=(s-1){\bf 1_{N}}+L'$, where
$L'$ is a define  vector  of  a  $[41,5, 20]$ code.
 In this case $\mathcal{C}$ is  the juxtaposition of $(s-1)\mathcal{S}$$_{5}$ and
a $[41,5, 20]$ code. Suppose $L$ is of type $]] (s-1)_{a}\mid
(s)_{b}\mid (s+1)_{c}]]$ with $a\geq 1$. By solving the  system of
linear equations ($\star$), we obtain the following types of $L'$:

$L'_{1}$: $]] (0)_{1}\mid (1)_{19} \mid(2)_{11}]]$;
 $L'_{2}$: $]] (0)_{2}\mid (1)_{17}\mid (2)_{12}]]$;

$L'_{3}$: $]] (0)_{3}\mid (1)_{15}\mid (2)_{13}]]$;
 $L'_{4}$: $]] (0)_{4}\mid (1)_{13}\mid (2)_{14}]]$;

$L'_{5}$: $]] (0)_{5}\mid (1)_{11}\mid (2)_{15}]]$;
 $L'_{6}$: $]](0)_{6}\mid (1)_{9}\mid (2)_{16}]]$;

$L'_{7}$: $]] (0)_{7}\mid (1)_{7}\mid (2)_{17}]]$.

There are nineteen classes of inequivalent  $[41,5, 20]$ codes with
defining vectors  of the above seven types, all these codes have
$h\geq 1$, hence $h([31s+10,5,16s+4])\geq 1$ when  $L$ satisfying
$l_{max}=s+1$ and $l_{min}=s-1$. For  the defining vectors
$L'_{i,j}$ of these inequivalent  $[41,5, 20]$ codes,
$h(\mathcal{C})$ and weight enumerators of their corresponding
$[31s+10,5,16s+4]$ codes, one can refer to Table \ref{tab2}.

\begin{table}[h!]
\begin{center}
\begin{minipage}{\textwidth}
\caption{19 inequivalent $[31s+10,5, 16s+4]$ codes }\label{tab2}
\begin{tabular*}{\textwidth}{@{\extracolsep{\fill}}lllllll@{\extracolsep{\fill}}}
\toprule
 \multicolumn{3}{c}{Type of defining vector of $L'$: $]](0)_{1}\mid(1)_{19}\mid(2)_{11}]]$} \\
\hline
defining  vector&h&weight  enumerator of $\mathcal{C}$\\
\hline
     (2212121201212112211111121111112)& 3&$1+18y^{16s+4}  +8y^{16s+6}+5y^{16s+8}$  \\
     (2212112201212112211111121111121)&  1& $1+17y^{16s+4}  +11y^{16s+6}+2y^{16s+8}+y^{16s+10}$ \\
     (2212111201212112211112121111112)&  4&$1+12y^{16s+4}  +14y^{16s+5}+3y^{16s+8}+2y^{16s+9}$ \\
\bottomrule
 \multicolumn{3}{c}{Type of defining vector: $]] (0)_{2}\mid (1)_{17} \mid(2)_{12}]]$} \\
\hline
  (0111111222222211122222101111111)&   1& $1+17y^{16s+4}  +12y^{16s+6}+y^{16s+8}+y^{16s+12}$    \\
   (2202112122021121111122111111221)&   4&$1+11y^{16s+4}  +16y^{16s+5}+3y^{16s+8}+y^{16s+12}$   \\
   (2222111201212112210112121111112)&   3&$1+19y^{16s+4}  +7y^{16s+6}+4y^{16s+8}+y^{16s+10}$  \\
    (2222111201212112210111221111121)&  1&$1+18y^{16s+4}  +10y^{16s+6}+y^{16s+8}+2y^{16s+10}$   \\
\bottomrule
 \multicolumn{3}{c}{Type of defining vector: $]] (0)_{3}\mid (1)_{15} \mid(2)_{13}]]$ } \\
\hline
     (2222021201212112210121121111112)& 5&$1+22y^{16s+4} +9y^{16s+8}$  \\
   (2122211202121111021221102122111)&   1&$1+19y^{16s+4}+9y^{16s+6} +3y^{16s+10}$  \\
    (2202112200212112221111121121121)&  3&$1+19y^{16s+4}+8y^{16s+6}+3y^{16s+8} +y^{16s+12}$ \\
    (0111111212222221122222200111111)& 4& $1+12y^{16s+4}+15y^{16s+5} +3y^{16s+8}+y^{16s+13}$ \\
  (2202112200212212221111121121111)&  4&  $1\!\!+\!\!13y^{16s\!+\!4}\!\!+\!\!14y^{16s\!+\!5}\!\!+\!\!y^{16s\!+\!8}\!\!+\!\!2y^{16s\!+\!9}\!\!+\!\!y^{16s\!+\!12}$\\
\bottomrule
  \multicolumn{3}{c}{Type of defining vector: $]] (0)_{4}\mid (1)_{13} \mid(2)_{14}]]$} \\
\hline
   (2021212202121211011212102121212)&  1& $1+18y^{16s+4} +11y^{16s+6}+y^{16s+8} +y^{16s+14}$  \\
   (2202212200212212221101121121111)&  3& $1\!\!+\!\!20y^{16s\!+\!4}\!\!+\!\!7y^{16s\!+\!6}+2y^{16s\!+\!8}\!\!+\!\!y^{16s\!+\!10}\!\!+\!\!y^{16s\!+\!12}$ \\
\bottomrule
 \multicolumn{3}{c}{Type of defining vector: $]] (0)_{5}\mid (1)_{11} \mid(2)_{15}]]$} \\
\hline
   (2202221201212112221100221021121)&  5& $1+23y^{16s+4} +7y^{16s+8}+y^{16s+12}$  \\
\bottomrule
  \multicolumn{3}{c}{Type of defining vector: $]] (0)_{6}\mid (1)_{9} \mid(2)_{16}]]$ } \\
\hline
 1222201102222110022220101222211)&   1&($1+18y^{16s+4} +12y^{16s+8} +y^{16s+16}$  \\
  (1102222111022221001222210012222)& 4& $1+12y^{16s+4}+16y^{16s+6}+2y^{16s+8} +y^{16s+16}$ \\
\bottomrule
  \multicolumn{3}{c}{Type of defining vector: $]] (0)_{7}\mid (1)_{7} \mid(2)_{17}]]$ }   \\
\hline
 (2202002200202212221211221121220)& 3&  $1+20y^{16s+4}+8y^{16s+6} +2y^{16s+8} +y^{16s+16}$  \\
(2202002200202202221211221121221)&  4& $1+14y^{16s+4}+14y^{16s+5}+2y^{16s+9} +y^{16s+16}$\\
\bottomrule
\end{tabular*}

\end{minipage}
\end{center}
\end{table}

\end{proof}

\begin{lemma} If the defining vector
$L=(l_{1},l_{2},\cdots,l_{N})$ of $\mathcal{C}$ satisfies
$l_{max}=s+1$ and $l_{min}=s-2$, then $h(\mathcal{C})\geq 3$ and
$\mathcal{C}$ is not an LCD code.\end{lemma}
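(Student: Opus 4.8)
The plan is to reduce this case, exactly as in the previous lemmas, to a classification of a short code. Since $l_{min}=s-2\ge 0$, this case occurs only for $s\ge 2$. Writing $L=(s-2)\mathbf{1}_{N}+L'$, the entries of $L'$ lie in $\{0,1,2,3\}$ with at least one $0$ (forced by $l_{min}=s-2$) and at least one $3$ (forced by $l_{max}=s+1$), and $L'$ is a defining vector of an optimal $[72,5,36]$ code $\mathcal{C}'$; that is, $\mathcal{C}$ is the juxtaposition of $(s-2)\mathcal{S}_{5}$ and $\mathcal{C}'$. Because $\mathcal{S}_{5}$ is self-orthogonal, $\mathbf{S}_{5}\mathbf{S}_{5}^{T}=0$ over $F_{2}$, so writing $G=((s-2)\mathbf{S}_{5}\mid G')$ we get $GG^{T}=G'(G')^{T}$ over $F_{2}$, whence $h(\mathcal{C})=5-\mathrm{rank}(G'(G')^{T})=h(\mathcal{C}')$. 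It therefore suffices to prove that every $[72,5,36]$ code whose defining vector satisfies $l'_{max}=3$ and $l'_{min}=0$ has $h\ge 3$.

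First I would enumerate the admissible types of $L'$. Here $\sigma=2\times 2^{4}+4=36$ is unchanged under the shift by $(s-2)\mathcal{S}_{5}$, and a type $]](0)_{a}\mid(1)_{b}\mid(2)_{c}\mid(3)_{e}]]$ must satisfy $a+b+c+e=31$ and $b+2c+3e=72$ with $a\ge 1$ and $e\ge 1$. Subtracting the two relations gives $c+2e-a=41$; combining with $a+b+c+e=31$ and $b,c\ge 0$ forces $e\ge 2a+10\ge 12$, so only finitely many candidate types survive. For each surviving type I would then solve the system $(\star)$ with $\sigma=36$ for all nonnegative integer vectors $\Lambda$, equivalently for all nonnegative integer defining vectors $L'$ of $\mathcal{C}'$, exactly as in the worked example and the preceding lemmas.

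For each admissible type I would group the resulting solutions into equivalence classes using Magma, choose one representative defining vector per class, and compute $h(\mathcal{C}')=5-\mathrm{rank}(G'(G')^{T})$ together with the weight enumerator. Confirming $h\ge 3$ in every class yields $h(\mathcal{C})=h(\mathcal{C}')\ge 3$, so $\mathcal{C}$ is not an LCD code.

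The main obstacle I anticipate is completeness together with the uniformity of the bound, rather than any single hard estimate. One must be certain that the enumeration captures every $L'$ with $l'_{min}=0$ and $l'_{max}=3$, and that $h\ge 3$ holds in all inequivalent classes without exception; this is delicate because the preceding lemma shows that the weaker hypothesis $l_{min}=s-1$ already permits codes with $h$ as small as $1$, so the jump to $h\ge 3$ rests entirely on the stronger constraint $l_{min}=s-2$. One might try to bypass the classification by passing to the reduced $[69,4,36]$ code obtained from a multiplicity-$3$ column: when its minimum column multiplicity is at least $4$ it equals $4\mathcal{S}_{4}\mid[9,4,4]$, and the self-orthogonality of $\mathcal{S}_{4}$ then forces $h\ge 4$ for that reduced code, hence $h(\mathcal{C}')\ge 3$ by the reduced-code lemma; but this shortcut fails for types whose reduced code has a column of multiplicity below $4$, which is why the Magma computation is still needed to settle the remaining classes.
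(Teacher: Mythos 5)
Your proposal follows essentially the same route as the paper: write $L=(s-2)\mathbf{1}_{N}+L'$ so that $\mathcal{C}$ is the juxtaposition of $(s-2)\mathcal{S}_{5}$ with a $[72,5,36]$ code, enumerate the admissible types of $L'$ (the paper finds six, with thirteen inequivalent codes all having $h\in\{3,5\}$), and verify $h\ge 3$ computationally for each class. Your explicit justification that $h(\mathcal{C})=h(\mathcal{C}')$ via $\mathbf{S}_{5}\mathbf{S}_{5}^{T}=0$ and your arithmetic bound $e\ge 2a+10$ on the types are correct refinements of what the paper leaves implicit, but the substance of the argument is identical.
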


\begin{proof} If
$l_{max}=s+1$ and $l_{min}=s-2$, then $s\geq
 2$ and $L=(s-2){\bf 1_{N}}+L''$, where
$L''$ is a defining vector  of  a   $[72,5, 36]$ code.
 In this case $\mathcal{C}$ is  the juxtaposition of $(s-2)\mathcal{S}$$_{5}$ and
a $[72,5, 36]$ code. Suppose $L$ is of type $]] (s-2)_{a}\mid
(s-1)_{b}\mid (s)_{c}\mid (s+1)_{d}]]$ with $a\geq 1$. By solving
system of linear equations ($\star$), we obtain the following six
types of $L''$:

$L''_{1,0}$: $]] (0)_{1}\mid (1)_{0}\mid(2)_{18} \mid(3)_{12}]]$;
 $L''_{1,2}$: $]] (0)_{1}\mid (1)_{2}\mid (2)_{14}\mid (3)_{14}]]$;

$L''_{1,4}$: $]] (0)_{1}\mid (1)_{4}\mid (2)_{10}\mid (3)_{16}]]$;
 $L''_{1,6}$: $]](0)_{1}\mid (1)_{6}\mid (2)_{6}\mid (3)_{18}]]$;

$L''_{3,4}$: $]] (0)_{3}\mid (1)_{4}\mid (2)_{4}\mid (3)_{20}]]$;
 $L''_{7,0}$: $]](0)_{7}\mid (1)_{0}\mid (2)_{0}\mid (3)_{24}]]$.

There are thirteen classes of inequivalent  $[72,5, 36]$ codes with
defining vectors of the above types, seven  classes have $h=5$ and
six  classes have $h=3$, thus all these codes have $h\geq 3$, hence
$h([31s+10,5,16s+4])\geq 3$ when  $L$ satisfying $l_{max}=s+1$ and
$l_{min}=s-2$. For details of the defining vectors $L''_{i,j}$ of
these inequivalent  $[72,5, 36]$ codes, $h(\mathcal{C})$ and weight
enumerators of their corresponding $[31s+10,5,16s+4]$ codes,  see
Table \ref{tab3}.

\begin{table}[h]
\begin{center}
\begin{minipage}{\textwidth}
\caption{13 inequivalent $[31s+10,5, 16s+4]$ codes
 }\label{tab3}
\begin{tabular*}{\textwidth}{@{\extracolsep{\fill}}lllllll@{\extracolsep{\fill}}}
\toprule
 \multicolumn{3}{c}{Type of defining vector of $L''$: $]] (0)_{1}\mid (1)_{0}\mid(2)_{18} \mid(3)_{12}]]$} \\
\hline
defining  vector&h&weight  enumerator  of $\mathcal{C}$ \\
\hline
   (3323232332222220332323233222222)& 5& $1+22y^{16s+4} +9y^{16s+8}$ \\
   (3323232332222220332323233222222)&  3&$1+20y^{16s+4}+6y^{16s+6} +3y^{16s+8} +2y^{16s+10}$\\
  (3323232332222220332323233222222)&    3&$1+19y^{16s+4}+8y^{16s+6} +3y^{16s+8} +y^{16s+12}$ \\
\bottomrule
 \multicolumn{3}{c}{Type of defining vector:  $]] (0)_{1}\mid (1)_{2}\mid(2)_{14} \mid(3)_{14}]]$} \\
\hline
    (3222203333232332122222133323233)& 5& $1+23y^{16s+4} +7y^{16s+8}+y^{16s+12}$  \\
   (3323213233031232332322223322223)&   3&$1+20y^{16s+4}+7y^{16s+6} +3y^{16s+8}+y^{16s+14}$\\
     (3333222333022232331222313323222)&3& $1\!\!+\!\!21y^{16s\!+\!4}\!\!+\!\!6y^{16s\!+\!6}\!\!+\!\!y^{16s\!+\!8}+2y^{16s\!+\!10}\!\!+\!\!y^{16s\!+\!12}$ \\
\bottomrule
 \multicolumn{3}{c}{Type of defining vector:  $]] (0)_{1}\mid (1)_{4}\mid(2)_{10} \mid(3)_{16}]]$} \\
\hline
   (3333303332121332331312322323222)& 5&  $1+24y^{16s+4} +5y^{16s+8}+2y^{16s+12}$ \\
  (3323203332131232332322323313123) & 3&  $1+20y^{16s+4}+8y^{16s+6} +2y^{16s+8} +y^{16s+16}$\\
\bottomrule
 \multicolumn{3}{c}{Type of defining vector:  $]] (0)_{1}\mid (1)_{6}\mid(2)_{6} \mid(3)_{18}]]$} \\
\hline
     (3333303233131232331312323313123)&5& $1+24y^{16s+4} +6y^{16s+8}+y^{16s+16}$  \\
     (3313103333232332113132133323233)&5& $1+20y^{16s+4}+7y^{16s+6}+3y^{16s+8}+y^{16s+14}$\\
     (3333123333032132331321313323123)&3& $1+22y^{16s+4}+6y^{16s+6} +2y^{16s+10}+y^{16s+16}$ \\
\bottomrule
 \multicolumn{3}{c}{Type of defining vector:   $]] (0)_{3}\mid (1)_{4}\mid(2)_{4} \mid(3)_{20}]]$} \\
\hline
  (3333303333030332331312313323213)&   5&  $1+26y^{16s+4} +2y^{16s+8} +2y^{16s+12}+y^{16s+16}$\\
    \bottomrule
 \multicolumn{3}{c}{Type of define vector: $]] (0)_{7}\mid (1)_{0}\mid(2)_{0} \mid(3)_{24}]]$} \\
\hline
     (3333303333030330333330333303033)& 5&$1+28y^{16s+4} +3y^{16s+16}$\\
   \bottomrule
\end{tabular*}
\end{minipage}
\end{center}
\end{table}

Summarizing the above, we have shown $h([31s+10,5,16s+4])\geq 1$
 for all $s\geq 1$, and there is no $[31s+10,5,16s+4]$ LCD
code.
\end{proof}

\subsection{\bf  $h([31s+14,5, 16s+6])\geq 1$ }

 In this subsection,  let $n=31s+14$, $d=16s+6$, $\mathcal{C}=[n,5,d]$ and $L=(l_{1},l_{2},\cdots,l_{N})$ be a
defining vector of $\mathcal{C}$. It is easy to check for this code,
$\sigma=2\times 2^{4}+6$ and $s-2\leq l_{i}\leq s+2$ for $1\leq
i\leq N$. Thus the defining vector $L$ of $\mathcal{C}$ may have the
following types:

 (1) $l_{max}=s+2$; (2) $l_{max}=s+1$ and $l_{min}=s$;

 (3) $l_{max}=s+1$ and $l_{min}=s-1$; (4) $l_{max}=s+1$ and
 $l_{min}=s-2$.

If $l_{max}=s+2$, then $\mathcal{C}$ has a reduced code
$[30s+12,4,16s+6]$, which is a  code with $h([30s+12,4,16s+6])=2$.
Thus, in this case one can deduce that $h(\mathcal{C})\geq 1$,
$\mathcal{C}$ is not an LCD code.

If $l_{max}=s+1$ and $l_{min}=s$, then $L=s{\bf 1_{N}}+L_{0}$, where
$L_{0}$ is a defining  vector  of  a  projective $[14,5, 6]$ code.
In this case $\mathcal{C}$ is the juxtaposition of
$s\mathcal{S}$$_{5}$ and a $[14,5, 6]$ code. According to
Refs.[10,11], one can know a $[14,5, 6]$ code is not an LCD code.
Hence $\mathcal{C}$ is not an LCD code.

 \begin{lemma} If the defining vector $L=(l_{1},l_{2},\cdots,l_{N})$
of $\mathcal{C}$ satisfies  $l_{max}=s+1$ and $l_{min}=s-1$, then
$h(\mathcal{C})\geq 1$ and $\mathcal{C}$ is not an LCD code.
\end{lemma}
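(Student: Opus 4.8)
The plan is to reproduce, for the residual length $45$, the reduction-and-classification argument just carried out in this subsection for the type $l_{max}=s+1$, $l_{min}=s-2$. Since both $s+1$ and $s-1$ occur among the $l_i$ while no larger or smaller value does, we have $s\geq 1$ and may write $L=(s-1)\mathbf{1}_N+L'$; thus $\mathcal{C}$ is the juxtaposition of $(s-1)\mathcal{S}_5$ with the code $\mathcal{C}'$ whose defining vector is $L'$. Counting coordinates gives the residual length $n'=n-31(s-1)=45$, and because every nonzero codeword picks up weight exactly $16(s-1)$ from the simplex part, the residual minimum distance is $d'=(16s+6)-16(s-1)=22$. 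Hence $L'$ is the defining vector of an optimal $[45,5,22]$ code $\mathcal{C}'$, with all entries of $L'$ lying in $\{0,1,2\}$.

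The key simplification I would use is that $\mathcal{S}_5$ is self-orthogonal over $F_2$: every nonzero codeword has weight $16$ and any two rows of $S_5$ meet in $2^{3}=8$ positions, so $S_5 S_5^{T}=\mathbf{0}$ modulo $2$. Writing the generator matrix of $\mathcal{C}$ as $G=((s-1)S_5\mid B)$ with $B$ a generator matrix of $\mathcal{C}'$, this gives $GG^{T}=(s-1)S_5S_5^{T}+BB^{T}=BB^{T}$ over $F_2$, whence $h(\mathcal{C})=5-\mathrm{rank}(GG^{T})=h(\mathcal{C}')$. It therefore suffices to prove that every $[45,5,22]$ code arising here has $h\geq 1$, working entirely with the residual defining vector $L'$.

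Next I would enumerate the admissible types and solve the linear system. Writing $L'$ as $]](0)_a\mid(1)_b\mid(2)_c]]$, the conditions $a+b+c=31$ (the number of columns of $S_5$) and $b+2c=45$ (the residual length) force $c=a+14$ and $b=17-2a$, so $1\leq a\leq 8$ and there are at most eight candidate types $]](0)_a\mid(1)_{17-2a}\mid(2)_{a+14}]]$. For each I would substitute $\sigma=38$ and residual distance $22$ into $(\star)$ and extract all nonnegative integer solutions $L'$ whose weight vector $P_5(L')^{T}$ has least entry exactly $22$; as in the length-$41$ analysis, several formally possible types are likely eliminated at this stage. The surviving solutions would then be sorted into equivalence classes in Magma, and for one representative per class I would record $h(\mathcal{C}')$ and the weight enumerator of $\mathcal{C}$ in a table of the same format as Tables \ref{tab2} and \ref{tab3}.

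The decisive step, and the main obstacle, is the uniform verification that \emph{every} one of these inequivalent $[45,5,22]$ codes satisfies $h\geq 1$: a single LCD representative ($h=0$) would invalidate the conclusion. Since an optimal code can in principle be LCD, this cannot be argued abstractly and must rest on the finite classification; I expect it to go through because the defining-vector constraints $l_{min}=s-1$, $l_{max}=s+1$ together with the rigidity of $(\star)$ confine $\mathcal{C}'$ to codes whose Gram matrix $BB^{T}$ is singular. The practical difficulty is only the volume of enumeration and equivalence testing; note also that the statement asserts merely $h\geq 1$ rather than a larger bound, reflecting the expectation that some of these residual codes have hull dimension exactly $1$. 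Once the table is complete with each listed hull dimension positive, the equality $h(\mathcal{C})=h(\mathcal{C}')\geq 1$ immediately shows $\mathcal{C}$ is not an LCD code, establishing the lemma.
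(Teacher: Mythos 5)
Your proposal follows the paper's proof essentially verbatim: the same decomposition $L=(s-1)\mathbf{1}_N+L'$ reducing to a residual $[45,5,22]$ code, the same enumeration of candidate types $]](0)_a\mid(1)_{17-2a}\mid(2)_{a+14}]]$ via $(\star)$ (the paper's surviving types are exactly $1\leq a\leq 7$), and the same computer classification into inequivalent classes (twenty-one of them, Table \ref{tab4}) each verified to have $h\geq 1$. Your explicit justification that $h(\mathcal{C})=h(\mathcal{C}')$ via the self-orthogonality of $\mathcal{S}_5$ is a useful detail the paper leaves implicit, but the argument is otherwise the same.
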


 \begin{proof} If $l_{max}=s+1$ and
$l_{min}=s-1$, then  $L=(s-1){\bf 1_{N}}+L'$, where $L'$ is a
defining vector  of  a   $[45,5, 22]$ code.
 In this case $\mathcal{C}$ is  the juxtaposition of $(s-1)\mathcal{S}$$_{5}$ and
a $[45,5, 22]$ code. Suppose $L$ is of type $]] (s-1)_{a}\mid
(s)_{b}\mid (s+1)_{c}]]$ with $a\geq 1$. By solving the  system of
linear equations ($\star$), we obtain the following types  $]]
(0)_{a}\mid (1)_{b}\mid (2)_{c}]]$ of $L'$:

$L'_{1}$: $]] (0)_{1}\mid (1)_{15} \mid(2)_{15}]]$;
 $L'_{2}$: $]] (0)_{2}\mid (1)_{13}\mid (2)_{16}]]$;

$L'_{3}$: $]] (0)_{3}\mid (1)_{11}\mid (2)_{17}]]$;
 $L'_{4}$: $]] (0)_{4}\mid (1)_{9}\mid (2)_{18}]]$;

$L'_{5}$: $]] (0)_{5}\mid (1)_{7}\mid (2)_{19}]]$;
 $L'_{6}$: $]](0)_{6}\mid (1)_{5}\mid (2)_{20}]]$;

$L'_{7}$: $]] (0)_{7}\mid (1)_{3}\mid (2)_{21}]]$.
\begin{table}[h]
\begin{center}
\begin{minipage}{\textwidth}
\caption{ 21 inequivalent $[31s+14,5, 16s+6]$ codes}\label{tab4}
\begin{tabular*}{\textwidth}{@{\extracolsep{\fill}}lllllll@{\extracolsep{\fill}}}
\toprule
 \multicolumn{3}{c}{Type of defining vector of $L'$: $]] (0)_{1}\mid (1)_{15} \mid(2)_{15}]]$} \\
\hline
defining  vector&h&weight  enumerator of $\mathcal{C}$\\
\hline
  (2212112122121120112122121121221)& 5&  $1+15y^{16s+6} +15y^{16s+8} +y^{16s+14}$ \\
   (2212112211211112220211221121221)& 3& $1+15y^{16s+6} +15y^{16s+8} +y^{16s+14}$\\
   (2211111211112112221221221122022)&  3&  $1+18y^{16s+6} +7y^{16s+8} +6y^{16s+10}$\\
(2111112121222220111111221122222)&1&$1+17y^{16s+6}+10y^{16s+8}\!\!+\!\!3y^{16s+10}\!\!+\!\!y^{16s+12}$\\
(1111111222222221222222201111111)&3&$1+8y^{16s+6}\!\!+\!\!15y^{16s+7}\!\!+\!\!7y^{16s+8}\!\!+\!\!y^{16s+15}$\\
(2212112211211212220211221121211)&2&$1\!\!+\!\!11y^{16s\!+\!6}\!\!+\!\!12y^{16s\!+\!6}\!\!+\!\!3y^{16s\!+\!8}\!\!+\!\!4y^{16s+9}\!\!+\!\!y^{16s\!+\!14}$\\
\bottomrule
 \multicolumn{3}{c}{Type of defining vector:  $]] (0)_{2}\mid (1)_{13} \mid(2)_{16}]]$}\\
\hline
    (1111112122222220011111221222222)& 1& $1+18y^{16s+6}+10y^{16s+8} +2y^{16s+10} +y^{16s+14}$\\
    (1211112122122220012111221122222)& 3& $1+19y^{16s+6}+6y^{16s+8} +5y^{16s+10} +y^{16s+12}$\\
    (1112222111122220111222201112222)& 3& $1+8y^{16s+6}+16y^{16s+7} +6y^{16s+8} +y^{16s+16}$ \\
    (1111122121222220011112221122222)& 1&  $1+18y^{16s+6}+9y^{16s+8} +2y^{16s+10} +2y^{16s+12}$\\
\bottomrule
 \multicolumn{3}{c}{Type of defining vector:  $]] (0)_{3}\mid (1)_{11} \mid(2)_{17}]]$} \\
\hline
   (1112222111122220011222220112222)&  3& $1+18y^{16s+6}+10y^{16s+8} +2y^{16s+10} +y^{16s+14}$\\
   (2111120122222220111112021222222)& 1&  $1+18y^{16s+6}+7y^{16s+8} +4y^{16s+10} +2y^{16s+14}$\\
   (1111122122122220002112221122222)& 1&  $1+20y^{16s+6}+5y^{16s+8} +4y^{16s+10} +2y^{16s+12}$ \\
    (2111222122022210111122221202221)&3&  $1+20y^{16s+6}+8y^{16s+8} +y^{16s+10} +2y^{16s+12}$\\
    (2212102211212212220201221121221)&2&   $1\!\!+\!\!14y^{16s\!+\!6}\!\!+\!\!10y^{16s\!+\!7}\!\!+\!\!2y^{16s\!+\!8}\!\!+\!\!4y^{16s\!+\!9}
\!\!+\!\!y^{16s+16}$\\
\bottomrule
 \multicolumn{3}{c}{Type of defining vector:   $]] (0)_{4}\mid (1)_{9} \mid(2)_{18}]]$} \\
\hline
   (1112222121022220011222221102222)&  1& $1+18y^{16s+6}+10y^{16s+8} +2y^{16s+10} +y^{16s+16}$\\
   (1211022122122220012102221122222)& 3& $1+21y^{16s+6}+4y^{16s+8} +3y^{16s+10} +3y^{16s+12}$\\
   (1101222122122220001122221122222)&  3& $1\!\!+\!\!20y^{16s\!+\!6}\!\!+\!\!6y^{16s\!+\!8}\!\!+\!\!3y^{16s\!+\!10}
\!\!+\!\!y^{16s\!+\!12}\!\!+\!\!y^{16s\!+\!14}$\\
\bottomrule
 \multicolumn{3}{c}{Type of defining vector: $]] (0)_{5}\mid (1)_{7} \mid(2)_{19}]]$} \\
\hline
    (1112222112022220002222220112222)&  3&$1+20y^{16s+6}+6y^{16s+8} +4y^{16s+10} +y^{16s+16}$\\
\bottomrule
 \multicolumn{3}{c}{Type of defining vector: $]]
(0)_{6}\mid (1)_{5} \mid(2)_{20}]]$} \\
\hline
   (1122222120022220012222221002222)&  1& $1\!+\!20y^{16s+6}\!+\!8y^{16s+8}\!+\!2y^{16s+12}\!+\!y^{16s+16}$\\
\bottomrule
 \multicolumn{3}{c}{Type of defining vector:  $]] (0)_{7}\mid (1)_{3}
\mid(2)_{21}]]$} \\
\hline
   (2222220122002220122222021200222)&  3& $1+21y^{16s+6}+7y^{16s+8}+3y^{16s+14}$\\
\bottomrule
\end{tabular*}
\end{minipage}
\end{center}
\end{table}

There are twenty one classes of inequivalent  $[45,5, 22]$ codes
with defining vector of the above seven types. And all these codes
have $h\geq 1$, hence $h([31s+14,5,16s+6])\geq 1$ when $L$
satisfying $l_{max}=s+1$ and $l_{max}=s-1$. For details of the
defining vectors $L'_{i,j}$ of these inequivalent  $[45,5, 22]$
codes, $h(\mathcal{C})$ and weight  enumerators of their
corresponding $[31s+14,5,16s+6]$ codes, one can refer to Table
\ref{tab4}.
\end{proof}

\begin{lemma} If the defining vector
$L=(l_{1},l_{2},\cdots,l_{N})$ of $\mathcal{C}$ satisfies
$l_{max}=s+1$ and $l_{min}=s-2$, then $h(\mathcal{C})\geq 3$ and
$\mathcal{C}$ is not an LCD code. \end{lemma}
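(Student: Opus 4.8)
The plan is to imitate the argument used for the analogous $l_{max}=s+1$, $l_{min}=s-2$ lemma at length $31s+10$, updating every numerical parameter to the present setting $n=31s+14$, $d=16s+6$. First I would record the structural reduction forced by the two extreme values of $L$: the hypothesis $l_{max}=s+1$, $l_{min}=s-2$ forces $s\geq 2$ and lets me write $L=(s-2){\bf 1_{N}}+L''$, where every entry of $L''$ lies in $\{0,1,2,3\}$ and $L''$ is the defining vector of a $[76,5,38]$ code. Indeed, peeling off $(s-2)$ copies of $\mathcal{S}_{5}$ reduces the length by $31(s-2)$ to $31s+14-31(s-2)=76$ and the distance by $16(s-2)$ to $16s+6-16(s-2)=38$. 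Thus $\mathcal{C}$ is the juxtaposition of $(s-2)\mathcal{S}_{5}$ and a $[76,5,38]$ code $\mathcal{D}$.

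The key point that makes the verification a finite computation is that $h(\mathcal{C})$ is independent of $s$. Writing $G=((s-2)S_{5}\mid D)$ for a generator matrix of $\mathcal{C}$, one has $GG^{T}=(s-2)S_{5}S_{5}^{T}+DD^{T}$; since $\mathcal{S}_{5}$ is self-orthogonal, $S_{5}S_{5}^{T}=0$ over $F_{2}$, whence $GG^{T}=DD^{T}$ and $h(\mathcal{C})=5-\mathrm{rank}(GG^{T})=5-\mathrm{rank}(DD^{T})=h(\mathcal{D})$. It therefore suffices to show $h(\mathcal{D})\geq 3$ for every $[76,5,38]$ code $\mathcal{D}$ that can arise from such an $L''$, and this is an $s$-free statement.

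Next I would enumerate the admissible types of $L''$. Writing such a type as $]](0)_{a}\mid (1)_{b}\mid (2)_{c}\mid (3)_{d}]]$ with $a\geq 1$ and $d\geq 1$, the two linear conditions $a+b+c+d=31$ and $b+2c+3d=76$ cut out a finite family; I would then prune it to the genuinely realizable types by solving the full system $(\star)$ with $\sigma=38$ and imposing $w_{i}\geq 38$ for all $i$. As in the $[72,5,36]$ case summarized in Table \ref{tab3}, it is precisely this minimum-distance requirement that discards the combinatorially possible but spurious types. For each surviving type I would use Magma to partition the solution set into equivalence classes, and for each class compute $h(\mathcal{D})=h(\mathcal{C})$ together with the weight enumerator of the corresponding $[31s+14,5,16s+6]$ code, collecting the output in a table of the same form as Tables \ref{tab3} and \ref{tab4}. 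Reading $h(\mathcal{C})\geq 3$ off that table gives the claim, and hence that $\mathcal{C}$ is not LCD.

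The main obstacle I anticipate is the completeness and classification in this last step rather than any single calculation: the previous subcase $l_{min}=s-1$ already required sorting the solutions of $(\star)$ into twenty-one inequivalent $[45,5,22]$ codes (Table \ref{tab4}), and here one must both certify that the list of types is exhaustive and trust Magma's equivalence partition to capture every inequivalent $[76,5,38]$ code. Establishing the uniform bound $h\geq 3$ (stronger than the $h\geq 1$ actually needed downstream) is the clean statement the table is designed to display; the remaining work is the routine but voluminous computation behind it.
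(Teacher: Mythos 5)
Your proposal follows essentially the same route as the paper: peel off $(s-2)\mathcal{S}_{5}$ to reduce to a $[76,5,38]$ code, enumerate the admissible types of $L''$ by solving $(\star)$ with $\sigma=38$, classify the resulting codes up to equivalence by computer, and read off $h\geq 3$ from the resulting table (the paper finds six types and ten inequivalent classes). Your added observation that $GG^{T}=DD^{T}$ because $S_{5}S_{5}^{T}=0$, so that $h(\mathcal{C})=h(\mathcal{D})$ is an $s$-free quantity, is a correct and welcome explicit justification of a step the paper leaves implicit.
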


\begin{proof} If $l_{max}=s+1$ and $l_{max}=s-2$, then $s\geq
 2$ and $L=(s-2){\bf 1_{N}}+L''$,
where $L''$ is a defining  vector  of  a  $[76,5,38]$ code.
 In this case $\mathcal{C}$ is  the juxtaposition of $(s-2)\mathcal{S}$$_{5}$ and
a $[76,5,38]$ code. Suppose $L$ is of type $]]
(s-2)_{a}\mid(s-1)_{b}\mid (s)_{c}\mid (s+1)_{d}]]$ with $a\geq 1$.
By solving the  system of linear equations ($\star$), we obtain the
following types $]] (0)_{a} \mid(1)_{b}\mid (2)_{c}\mid(3)_{d}]]$ of
$L''$:

$L''_{1,0}$: $]] (0)_{1}\mid (1)_{0}\mid(2)_{14} \mid(3)_{16}]]$;
 $L''_{1,2}$: $]] (0)_{1}\mid (2)_{2}\mid (2)_{10}\mid (3)_{18}]]$;

$L''_{1,4}$: $]] (0)_{1}\mid (1)_{4}\mid (2)_{6}\mid (3)_{20}]]$;
$L''_{1,6}$: $]] (0)_{1}\mid (1)_{6}\mid (2)_{2}\mid (3)_{22}]]$;

 $L''_{3,0}$: $]](0)_{3}\mid (1)_{0}\mid (2)_{8}\mid (3)_{20}]]$;
 $L''_{3,4}$: $]](0)_{3}\mid (1)_{4}\mid (2)_{0}\mid (3)_{24}]]$.

There are 10 classes of inequivalent  $[76,5,38]$ codes with the
defining vectors of the above six types. And all these codes have
$h\geq 3$, hence $h([31s+14,5,16s+6])\geq 3$ when  $L$ satisfying
$l_{max}=s+1$ and $l_{min}=s-2$. For defining vector $L''_{i,j}$ of
these inequivalent  $[76,5,38]$ codes, $h(\mathcal{C})$ and their
weight  enumerators of $[31s+14,5,16s+6]$ codes, see Table
\ref{tab5}.

\begin{table}[h]
\begin{center}
\begin{minipage}{\textwidth}
\caption{ 10 inequivalent $[31s+14,5, 16s+6]$ codes}\label{tab5}
\begin{tabular*}{\textwidth}{@{\extracolsep{\fill}}lllllll@{\extracolsep{\fill}}}
\toprule
 \multicolumn{3}{c}{Type of defining vector of $L''$:  $]] (0)_{1}\mid (1)_{0}\mid(2)_{14} \mid(3)_{16}]]$} \\
\hline
defining  vector&h&weight  enumerator  of $\mathcal{C}$\\
\hline
  (3323223233232232332322303323223)& 5& $1+16y^{16s+6} +14y^{16s+8}+y^{16s+16}$\\
  (3323223233322232330322333323222)& 3& $1+19y^{16s+6} +7y^{16s+8} +4y^{16s+10} +y^{16s+14}$\\
  (3323223323322232330322333233222)& 3& $1+20y^{16s+6} +5y^{16s+8}+4y^{16s+10} +2y^{16s+12}$\\
\bottomrule
 \multicolumn{3}{c}{Type of defining vector:  $]] (0)_{1}\mid (1)_{2}\mid(2)_{10} \mid(3)_{18}]]$}\\
\hline
  (3323223233332132330322333323123)&  3&$1+20y^{16s+6}+6y^{16s+8}+4y^{16s+10}+y^{16s+16}$\\
   (3323313323332222330313333232223)& 3& $1+22y^{16s+6}+3y^{16s+8}+2y^{16s+10}+4y^{16s+12}$\\
 (3323213233323232330312333323232)&  3& $1\!\!+\!\!21y^{16s\!+\!6}\!\!+\!\!5y^{16s\!+\!8}\!\!+\!\!2y^{16s\!+\!10}
\!\!+\!\!2y^{16s\!+\!12}\!\!+\!\!y^{16s\!+\!14}$\\
\bottomrule
 \multicolumn{3}{c}{Type of defining vector: $]] (0)_{1}\mid (1)_{4}\mid(2)_{2} \mid(3)_{20}]]$} \\
\hline
 (3323113233332332330311333323323)&   3&$1\!\!+\!\!22y^{16s+6}\!\!+\!\!4y^{16s\!+\!8}\!\!+\!\!2y^{16s\!+\!10}
\!\!+\!\!2y^{16s\!+\!12}\!\!+\!\!y^{16s\!+\!16}$\\
\bottomrule
 \multicolumn{3}{c}{Type of defining vector: $]] (0)_{1}\mid (1)_{6}\mid(2)_{2} \mid(3)_{22}]]$} \\
\hline
  (3323203333131333331313313333313)&  3&$1+22y^{16s+6}+6y^{16s+8}  +2y^{16s+14}+y^{16s+16}$\\
\bottomrule
 \multicolumn{3}{c}{Type of defining vector: $]] (0)_{3}\mid (1)_{0}\mid(2)_{8} \mid(3)_{20}]]$} \\
\hline
  (3323203233333232330302333323233)& 3& $1 +24y^{16s+6}+2y^{16s+8}  +4y^{16s+12}+y^{16s+16}$\\
\bottomrule
 \multicolumn{3}{c}{Type of defining vector: $]] (0)_{3}\mid (1)_{4}\mid(2)_{0} \mid(3)_{24}]]$} \\
\hline
  (3333303333131333330303313333313)& 3&
  $1+24y^{16s+6}+4y^{16s+8}+3y^{16s+16}$\\
\hline
\end{tabular*}
\end{minipage}
\end{center}
\end{table}

Summarizing the above, we have shown $h([31s+14,5,16s+6])\geq 1$
holds for all $s\geq 1$, and there is no $[31s+14,5,16s+6]$ LCD
code.
\end{proof}
\subsection{\bf  $h([31s+18,5, 16s+8])\geq 1$ }

 In this subsection, we let $n=31s+18$ and $d=16s+8$, $\mathcal{C}$ $=[n,5,d]$,
  and $L=(l_{1},l_{2},\cdots,l_{N})$ be a
defining vector of $\mathcal{C}$.  It is easy to check for this
code, $\sigma=2\times 2^{4}+8$ and $s-2\leq l_{i}\leq s+3$ for
$1\leq i\leq N$. Thus the defining vector $L$ of $\mathcal{C}$ may
have the following types:

 (1) $l_{max}=s+3$; (2) $l_{max}=s+2$;  (3)  $l_{max}=s+1$ and $l_{min}=s$;

(4) $l_{max}=s+1$ and $l_{min}=s-1$; (5) $l_{max}=s+1$ and
 $l_{min}=s-2$.

If $l_{max}=s+3$, then $\mathcal{C}$ has a reduced code
$[30s+15,4,16s+8]$, which is an SO code, thus one can deduce that
$h(\mathcal{C})\geq 3$, $\mathcal{C}$ is not  LCD.

If $l_{max}=s+2$, then $\mathcal{C}$ has a reduced code
$[30s+16,4,16s+8]$$=[15m+1,4,8m]$ for $m=2s+1$, which  is a code
with $h\geq 2$,  then one can deduce that $h(\mathcal{C})\geq 1$ and
$\mathcal{C}$ is not LCD.

If $l_{max}=s+1$ and $l_{min}=s$, then $L=s{\bf 1_{N}}+L_{0}$, where
$L_{0}$ is a defining  vector  of  a  projective $[18,5, 8]$ code.
In this case $\mathcal{C}$ is  the juxtaposition of
$s\mathcal{S}$$_{5}$ and an $[18,5, 8]$ code. According to [10,11],
an $[18,5, 8]$ code is not  LCD and $h([18,5, 8])\geq 1$, hence
$h(\mathcal{C})\geq 1$ and $\mathcal{C}$ is not LCD.

For $L$ satisfying (4) or (5), we use two lemmas to check
$h(\mathcal{C})\geq 1$.

\begin{lemma}If the defining vector
$L=(l_{1},l_{2},\cdots,l_{N})$ of $\mathcal{C}$ satisfies
$l_{max}=s+1$ and $l_{min}=s-2$, then
 $h(\mathcal{C})\geq 1$ and $\mathcal{C}$ is not an LCD code. \end{lemma}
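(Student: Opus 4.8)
The plan is to follow the template of the two preceding $l_{min}=s-2$ lemmas, stripping off the simplex part and then classifying the residual code. Since $l_{max}=s+1$ and $l_{min}=s-2$ force $s\geq 2$, I would first write $L=(s-2)\mathbf{1}_{N}+L''$, so that $\mathcal{C}$ is the juxtaposition of $(s-2)\mathcal{S}_{5}$ with the code $\mathcal{D}$ whose defining vector is $L''$. A length-and-distance count identifies $\mathcal{D}$: one has $n''=(31s+18)-31(s-2)=80$ and $d''=(16s+8)-16(s-2)=40$, so $L''$ is a defining vector of an $[80,5,40]$ code, with entries in $\{0,1,2,3\}$, at least one entry $0$ (since $l_{min}=s-2$) and at least one entry $3$ (since $l_{max}=s+1$).

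The simplification I would exploit is that $h(\mathcal{C})$ is independent of $s$ and equals $h(\mathcal{D})$. Taking a generator matrix $G=((s-2)S_{5}\mid D)$ of $\mathcal{C}$, where $D$ generates $\mathcal{D}$, gives $GG^{T}=(s-2)S_{5}S_{5}^{T}+DD^{T}$ over $F_{2}$. Because $\mathcal{S}_{5}=[31,5,16]$ is self-orthogonal, every row of $S_{5}$ has the even weight $16$ and any two distinct rows meet in $8$ coordinates, so $S_{5}S_{5}^{T}=0$ over $F_{2}$ and hence $GG^{T}=DD^{T}$. Therefore $h(\mathcal{C})=5-\mathrm{rank}(DD^{T})=h(\mathcal{D})$, and the lemma reduces to proving $h(\mathcal{D})\geq 1$ for every admissible $[80,5,40]$ code.

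Next I would enumerate the admissible types $]](0)_{a}\mid(1)_{b}\mid(2)_{c}\mid(3)_{d}]]$ of $L''$, subject to $a,d\geq 1$, $a+b+c+d=31$ and $b+2c+3d=80$. Eliminating $b$ and $c$ yields $c+2d=49+a$ and $b=d-2a-18$; the nonnegativity constraints then confine $d$ to the window $2a+18\leq d\leq \lfloor(49+a)/2\rfloor$, which is nonempty only for a few small values of $a$, leaving finitely many candidate types. For each type I would solve the linear system $(\star)$ for all nonnegative integer solutions $L''$ (as in the earlier example), use Magma to group them into inequivalence classes, and compute $h(\mathcal{D})$ together with the weight enumerator of each class, recording the outcome in a table parallel to Tables \ref{tab3} and \ref{tab5}.

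The main obstacle I anticipate is exhaustiveness together with the weakness of the guaranteed bound. In contrast with the $[72,5,36]$ and $[76,5,38]$ cases, where every class satisfied $h\geq 3$, I expect some class of $[80,5,40]$ codes here to have hull dimension strictly below $3$ (which is presumably why the statement claims only $h(\mathcal{C})\geq 1$); hence no uniform shortcut is available and each inequivalence class must be inspected individually to exclude the LCD possibility $h=0$. The delicate points are confirming that the $(\star)$-solutions exhaust all admissible types and that the Magma equivalence test collapses the solutions into the correct classes. Once every class is shown to satisfy $h(\mathcal{D})\geq 1$, the identity $h(\mathcal{C})=h(\mathcal{D})$ gives $h(\mathcal{C})\geq 1$, so $\mathcal{C}$ is not an LCD code.
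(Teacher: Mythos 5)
Your proposal follows essentially the same route as the paper: peel off $(s-2)\mathcal{S}_5$, identify the residual $[80,5,40]$ code, enumerate the admissible types of $L''$ from the counting constraints together with the system $(\star)$, and classify the finitely many solutions by computer to check the hull of each class (the paper finds four types and seven inequivalence classes, all in fact with $h\geq 3$, so the low-hull class you anticipate does not materialize, though your class-by-class inspection is exactly what the paper does). One remark: the lemma as printed carries a typo, since the paper's proof attached to this statement actually treats $l_{\min}=s-1$ (residual $[49,5,24]$ codes) while the $l_{\min}=s-2$ case you address is handled in precisely your manner in the immediately following lemma; your explicit observation that $S_5S_5^T=0$ over $F_2$ forces $h(\mathcal{C})=h(\mathcal{D})$ is a useful detail the paper leaves implicit.
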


\begin{proof} If $l_{max}=s+1$ and $l_{max}=s-1$, then $L=(s-1){\bf
1_{N}}+L'$, where  $L'$ is a defining  vector  of  a  $[49,5, 24]$
code.  In  this case $\mathcal{C}$ is  the juxtaposition of
$(s-1)\mathcal{S}$$_{5}$ and a $[49,5, 24]$ code. By solving the
system of linear equations ($\star$), we obtain the following types
of $L'$:

$L'_{1}$: $]] (0)_{1}\mid (1)_{11} \mid(2)_{19}]]$;
 $L'_{2}$: $]] (0)_{2}\mid (1)_{9}\mid (2)_{20}]]$;
$L'_{3}$: $]] (0)_{3}\mid (1)_{7}\mid (2)_{21}]]$;

 $L'_{4}$: $]] (0)_{4}\mid (1)_{5}\mid (2)_{22}]]$;
$L'_{5}$: $]](0)_{6}\mid (1)_{1}\mid (2)_{24}]]$.

There are fifteen  classes of inequivalent  $[49,5, 24]$ codes with
defining vector of the above five types, all these codes have $h\geq
1$, hence $h([31s+18,5,16s+8])\geq 1$ when  $L$ satisfies
$l_{max}=s+1$ and $l_{max}=s-1$. For details of the defining vectors
$L'_{i,j}$ of these inequivalent  $[49,5, 24]$ codes,
$h(\mathcal{C})$ and their weight  enumerators of $[31s+18,5,16s+8]$
codes, see Table \ref{tab6}.

\begin{table}[h]
\begin{center}
\begin{minipage}{\textwidth}
\caption{ 15 inequivalent $[31s+18,5, 16s+8]$ codes}\label{tab6}
\begin{tabular*}{\textwidth}{@{\extracolsep{\fill}}lllllll@{\extracolsep{\fill}}}
\toprule
 \multicolumn{3}{c}{Type of defining vector of  $L'$: $]] (0)_{1}\mid (1)_{11} \mid(2)_{19}]]$} \\
\hline
defining  vector&h&weight   enumerator   of $\mathcal{C}$\\
\hline
  (2212112122121122221210222212122)&  3&$1+15y^{16s+8} +16y^{16s+10} +y^{16s+16}$\\
  (2212121211122122221210222221122)&  3&$1+17y^{16s+8} +8y^{16s+10} +6y^{16s+12}$\\
  (2212121211221122221210222211222)&  1&$1+16y^{16s+8} +11y^{16s+10} +3y^{16s+12}+y^{16s+14}$\\
  (2212122211122122221210222221121)&  1&$1+11y^{16s+8} +14y^{16s+9} +4y^{16s+12}+2y^{16s+13}$\\
  (2212112122121222221211222212102)&  2&$1\!\!+\!\!10y^{16s\!+\!8}\!\!+\!\!12y^{16s\!+\!9}\!\!+\!\!4y^{16s\!+\!10}\!\!+\!\!4y^{16s\!+\!11}\!\!+\!\!y^{16s\!+\!16}$\\
\bottomrule
 \multicolumn{3}{c}{Type of defining vector:$]] (0)_{2}\mid (1)_{9} \mid(2)_{20}]]$} \\
\hline
  (2212112122122222221210222212012)&  1&$1+16y^{16s+8} +12y^{16s+10}+2y^{16s+12} +y^{16s+16}$\\
  (2212122122121222221210222212102)&  4&$1+10y^{16s+8} +16y^{16s+9} +4y^{16s+12}+y^{16s+16}$\\
  (2212212212122212221200222121122)&  3&$1+18y^{16s+8} +7y^{16s+10} +5y^{16s+12}+y^{16s+14}$\\
  (2212212211122222221200222221121)&  1&$1+17y^{16s+8} +10y^{16s+10} +2y^{16s+12}+2y^{16s+14}$\\
\bottomrule
 \multicolumn{3}{c}{Type of defining vector:$]] (0)_{3}\mid (1)_{7} \mid(2)_{21}]]$} \\
\hline
 2212112211212012220222222222022)&   5&($1+21y^{16s+8} +10y^{16s+12}$\\
  (2222201212222122220202222111122)&  3&$1+18y^{16s+8} +8y^{16s+10} +4y^{16s+12}+y^{16s+16}$\\
  (2222202212122212220201222121122)&  1&$1+10y^{16s+8} +9y^{16s+10} +y^{16s+12}+3y^{16s+14}$\\
  (2222202212121212220202222121212)&  4& $1\!\!+\!\!12y^{16s\!+\!8}\!\!+\!\!14y^{16s\!+\!9}\!\!+\!\!2y^{16s\!+\!12}\!\!+\!\!2y^{16s\!+\!13}\!\!+\!\!y^{16s\!+\!16}$\\
\bottomrule
 \multicolumn{3}{c}{Type of defining vector:$]] (0)_{4}\mid (1)_{5} \mid(2)_{22}]]$} \\
\hline
  (2212212122222022221200222222012)&  1&$1+18y^{16s+8} +10y^{16s+10} +2y^{16s+14}+y^{16s+16}$\\
\bottomrule
 \multicolumn{3}{c}{Type of defining vector:$]] (0)_{6}\mid
(1)_{1} \mid(2)_{24}]]$} \\
\hline
  (2222202122222022220202222202022)&  4& $1+12y^{16s+8} +16y^{16s+9} +3y^{16s+16}$\\
\bottomrule
\end{tabular*}
\end{minipage}
\end{center}
\end{table}
\end{proof}

\begin{lemma} If the defining vector
$L=(l_{1},l_{2},\cdots,l_{N})$ of $\mathcal{C}$ satisfies
$l_{max}=s+1$ and $l_{min}=s-2$, then $\mathcal{C}$ is not an LCD
code. \end{lemma}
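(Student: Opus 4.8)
The plan is to reduce this infinite family of codes (one for each admissible $s$) to a single finite classification, exactly as in the preceding lemma. Since $l_{\min}=s-2$ forces $s\geq 2$, I would first write $L=(s-2)\mathbf{1}_{N}+L''$, where $L''$ has all entries in $\{0,1,2,3\}$; then $\mathcal{C}$ is the juxtaposition of $(s-2)\mathcal{S}_{5}$ and the code $\mathcal{C}''$ with defining vector $L''$, and this code has parameters $[80,5,40]$, since $80=n-(s-2)\cdot 31$ and $40=d-(s-2)\cdot 16$ are independent of $s$. In type form $L$ is $]](s-2)_{a}\mid (s-1)_{b}\mid (s)_{c}\mid (s+1)_{d}]]$, and the hypotheses $l_{\min}=s-2$, $l_{\max}=s+1$ become $a\geq 1$ and $d\geq 1$, so that $L''$ has type $]](0)_{a}\mid (1)_{b}\mid (2)_{c}\mid (3)_{d}]]$.

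The key observation that makes the reduction legitimate is that the hull dimension does not see the simplex part. For any defining vector $L$ with generator $G=(l_{1}\alpha_{1},\dots,l_{N}\alpha_{N})$ one has $GG^{T}\equiv\sum_{i=1}^{N}l_{i}\,\alpha_{i}\alpha_{i}^{T}\pmod 2$, so $GG^{T}$ depends only on the parities of the $l_{i}$. Because $\sum_{i=1}^{N}\alpha_{i}\alpha_{i}^{T}=\mathbf{S}_{5}\mathbf{S}_{5}^{T}\equiv 0\pmod 2$ (every diagonal entry equals $2^{4}$ and every off-diagonal entry equals $2^{3}$, the row weights and pairwise inner products of the simplex code), adding the constant vector $(s-2)\mathbf{1}_{N}$ leaves $GG^{T}\bmod 2$ unchanged. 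Hence $h(\mathcal{C})=h(\mathcal{C}'')$ for every $s$, and it suffices to prove $h(\mathcal{C}'')\geq 1$ for every admissible $[80,5,40]$ defining vector $L''$.

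To complete the argument I would solve the system $(\star)$ for the $[80,5,40]$ code, with $\sigma=40$ and the same deviation vector $\Lambda$ as for $\mathcal{C}$, enumerating all nonnegative integer $\Lambda$ (summing to $40$, with at least one zero entry) that produce a nonnegative integer $L''$; the two linear constraints $a+b+c+d=31$ and $b+2c+3d=80$ together with the distance requirement cut these down to a short, explicit list of types $]](0)_{a}\mid (1)_{b}\mid (2)_{c}\mid (3)_{d}]]$. For each type I would generate all integer solutions, use Magma to partition them into equivalence classes, and for each representative compute $h=5-\mathrm{rank}(GG^{T})$, recording the values together with the weight enumerators in a table parallel to Tables~\ref{tab4}--\ref{tab6}; the lemma then follows provided every class satisfies $h\geq 1$.

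The main obstacle I expect is completeness and sheer volume rather than any conceptual difficulty. Among all the sub-cases of this section the reduced length $80$ is the largest, so the system $(\star)$ has the most solutions and the greatest number of inequivalent codes to certify, and the conclusion fails if even a single class is LCD. The crucial checks are therefore that the enumerated list of types is genuinely exhaustive and that no class in it has hull dimension $0$; unlike the $l_{\min}=s-2$ sub-cases for the shorter lengths $72$ and $76$, where the hull was uniformly at least $3$, here I would expect some classes to have hull dimension exactly $1$, which is why the statement asserts only that $\mathcal{C}$ is not LCD rather than giving a larger explicit lower bound.
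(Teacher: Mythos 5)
Your plan matches the paper's proof essentially step for step: write $L=(s-2)\mathbf{1}_{N}+L''$ to reduce to an $[80,5,40]$ code, solve $(\star)$ to obtain a short list of types $]](0)_{a}\mid(1)_{b}\mid(2)_{c}\mid(3)_{d}]]$, classify the resulting codes with Magma, and check the hull dimension of each representative; your explicit remark that $\mathbf{S}_{5}\mathbf{S}_{5}^{T}\equiv 0\pmod 2$ so that $GG^{T}\bmod 2$ (and hence $h$) is unchanged by the simplex part nicely makes precise a transfer the paper leaves implicit. The only discrepancy is your expectation that some classes would have hull dimension exactly $1$: the paper finds four admissible types and seven inequivalent $[80,5,40]$ codes, all with $h\geq 3$, but this does not affect the validity of your argument, which only needs $h\geq 1$ in every class.
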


\begin{proof} If $l_{max}=s+1$ and $l_{max}=s-2$, then $s\geq
 2$ and $L=(s-2){\bf 1_{N}}+L''$,
where $L''$ is a defining  vector  of  an  $[80,5,40]$ code.
 In this case $\mathcal{C}$ is  the juxtaposition of $(s-2)\mathcal{S}$$_{5}$ and
an $[80,5,40]$ code. Suppose $L$ is of type $]]
(s-2)_{a}\mid(s-1)_{b} \mid(s)_{c}\mid (s+1)_{d}]]$ with $a\geq 1$.
By solving the  system of linear equations ($\star$), we obtain the
following types  of $L''$:

$L''_{1,0}$: $]] (0)_{1}\mid (1)_{0}\mid(2)_{10} \mid(3)_{20}]]$;
 $L''_{1,2}$: $]] (0)_{1}\mid (2)_{2}\mid (2)_{6}\mid (3)_{22}]]$;

$L''_{1,4}$: $]] (0)_{1}\mid (1)_{4}\mid (2)_{2}\mid (3)_{24}]]$;
 $L''_{3,0}$: $]](0)_{3}\mid (1)_{0}\mid (2)_{4}\mid (3)_{24}]]$.

There are seven  classes of inequivalent  $[80,5,40]$ codes with
defining vector of the above four types, all these codes have $h\geq
3$, hence $h([31s+18,5,16s+8])\geq 3$ when  $L$ satisfying
$l_{max}=s+1$ and $l_{max}=s-2$. For details of the defining vectors
$L''_{i,j}$ of these inequivalent  $[80,5,40]$ codes, and
$h(\mathcal{C})$ and weight  enumerators of their corresponding
 $[31s+18,5,16s+8]$ codes, see  Table \ref{tab7}.

\begin{table}[h]
\begin{center}
\begin{minipage}{\textwidth}
\caption{ 7 inequivalent $[31s+18,5, 16s+8]$ codes }\label{tab7}
\begin{tabular*}{\textwidth}{@{\extracolsep{\fill}}lllllll@{\extracolsep{\fill}}}
\toprule
 \multicolumn{3}{c}{Type of defining vector $L''$:  $]] (0)_{1}\mid (1)_{0}\mid(2)_{10} \mid(3)_{20}]]$} \\
\hline
defining  vector&h&weight  enumerator   of $\mathcal{C}$\\
\hline
  (3333233323332223330332333232223)& 5& $1+21y^{16s+8} +10y^{16s+12}$\\
  (3332332323332233330233233233223)& 3& $1+18y^{16s+8} +8y^{16s+10} +4y^{16s+12} +y^{16s+16}$\\
  (3323233322233333332320333332232)& 3& $1+19y^{16s+6} +6y^{16s+10} +4y^{16s+12}+2y^{16s+14}$\\
\bottomrule
 \multicolumn{3}{c}{Type of defining vector:   $]] (0)_{1}\mid (1)_{2}\mid(2)_{6} \mid(3)_{22}]]$} \\
\hline
  (3323203322323323331313333333333)& 5&  $1+22y^{16s+8} +8y^{16s+12}+y^{16s+16}$\\
  (3333303323232323331313333232333)& 3& $1\!\!+\!\!20y^{16s\!+\!8}\!\!+\!\!6y^{16s\!+\!10}\!\!+\!\!2y^{16s\!+\!12}\!\!+\!\!2y^{16s\!+\!14}
\!\!+\!\!y^{16s\!+\!16}$\\
\bottomrule
 \multicolumn{3}{c}{Type of defining vector:    $]] (0)_{1}\mid (1)_{4}\mid(2)_{2} \mid(3)_{24}]]$} \\
\hline
   (33333313233333133330133333233133)& 3&  $1+21y^{16s+8} +8y^{16s+10}+3y^{16s+16}$\\

\bottomrule
 \multicolumn{3}{c}{Type of defining vector:  $]] (0)_{3}\mid (1)_{0}\mid(2)_{4} \mid(3)_{24}]]$} \\
\hline
   (3333303323333323330303333232333)& 5&  $1+24y^{16s+8}+4y^{16s+12} +3y^{16s+16}$\\
\bottomrule
\end{tabular*}
\end{minipage}
\end{center}
\end{table}

Summarizing the above, we have shown $h([31s+18,5,16s+8])\geq 1$ for
all $s\geq 1$ and there is no $[31s+18,5,16s+8]$ LCD code.

\end{proof}
\section{Conclusion}

Combining with known results on optimal LCD codes, the minimum
distances of all binary optimal LCD codes of dimension 5 have been
wiped out in this manuscript. More precisely, we have determined the
minimum distances of optimal $[n,5]$ LCD codes  with $n=31s+t\geq
14$ and $t \in\{2,8,10,12,14,16,18\}$, which haven't been
systematically investigated in the literature. By the  methods  of
reduced codes, classifying optimal linear codes and calculating the
hull dimension of $\mathcal{C}$, one may further study the
classification of optimal linear codes and determine the minimum
distances of optimal LCD codes with higher dimensions.

\section*{Acknowledgements}
This work is supported by Natural Science Foundation of Shaanxi
Province under Grant Nos.2021JQ-335, 2022JQ-046, 2023-JC-YB-003,
2023-JC-QN-0033, National Natural Science Foundation of China under
Grant No.U21A20428.

\section*{Statements and Declarations}

We declare that we have no known competing financial interests or
personal relationships that could have appeared to influence the
work reported in this paper.


\begin{thebibliography}{00}


\bibitem {Huffman} Huffman, W. C., Pless, V.  Fundamentals of Error-Correcting
Codes, Cambridge University Press (2003).

\bibitem {Assmus}
Assmus, E. F.,  Key,  J. D., Affine and projective planes, Discrete
Math., vol. 83, pp. 161-187, Aug. 1990.


\bibitem{Lv2015} L. Lv, R. Li, L. Guo and Q. Fu. Maximal entanglement
entanglement-assisted quantum codes constructed from linear codes,
Quantum Inf. Processing, vol.14, pp.165-182, 2015.


\bibitem{massey1992linear}
J. L. Massey, Linear codes with complementary duals, Discrete Math.,
vol. 106, pp. 337-342, 1992.



\bibitem{carlet2015complementary}
C. Carlet and S. Guilley, Complementary dual codes for
countermeasures to side-channel attacks, in Coding Theory and
Applications, pp. 97-105, Springer, 2015.

\bibitem{carlet2018}
C. Carlet, S. Mesnager, C. Tang, Y. Qi, R. Pellikaan, Linear codes
over Fq are equivalent to LCD codes for $q> 3$, IEEE Trans.
Information Theory, 64: 3010-3017, 2018.

\bibitem{Galvez}
 L. Galvez, J.L. Kim, N. Lee, Y.G. Roe and B.-S. Won,
Some bounds on binary LCD codes, Cryptogr. Commun. 10, 719-728,
2018.

\bibitem{Fu Q.}
Q. Fu, R. Li, Q. Fu and Y. Rao, On the construction of binary
optimal LCD codes with short length. Int. J. Found. Comput. Sci. 30,
pp. 1237-1245, 2019.


\bibitem{harada2019binary}
M. Harada and K. Saito, Binary linear complementary dual codes,
Cryptogr. Commun., vol. 11, no. 4, pp. 677-696, 2019.

\bibitem{araya2021characterization}
M. Araya, M. Harada, and K. Saito, Characterization and
classification of optimal lcd codes, Designs, Codes  Cryptogr., vol.
89, no. 4, pp. 617-340, 2021.

\bibitem{araya2020minimum}
M. Araya and M. Harada, On the minimum weights of binary linear
complementary dual codes, Cryptogr. Commun., vol. 12, no. 2, pp.
285-300, 2020.

\bibitem{araya1908.08} M. Araya, M. Harada and Ken Saito, On the minimum weights
of binary LCD codes and ternary LCD codes,  Finite Fields  Th. App.,
vol. 76, 101925, 2021.


\bibitem{bouyuklieva2021optimal}
S. Bouyuklieva, Optimal binary lcd codes, Designs, Codes, Cryptogr.,
vol. 89, no. 11, pp. 2445-2461, 2021.

\bibitem{Li, Minjia Shi}S. Li and  M. Shi, Improved lower and upper bounds for LCD
codes, arXiv:2206.04


\bibitem{Li2022}
R. Li, Y. Liu and Q. Fu, On some problems of LCD codes, in 2022
Symposium on Coding theory and Cryptogram and their related topics,
Apr. 2022. China.

\bibitem{Grassl}
M. Grassl, Code Tables: Bounds on the parameters of various types of
codes, http://www.codetables.de/.

\bibitem{Li2008}
R. Li, Z. Xu, X. Zhao. On The Classification of Binary Optimal
Self-orthogonal Codes. IEEE Transactions On Information Theory, 54:
3778-3782, 2008.

\bibitem{Zuo2011}
F. Zuo, on two problems of codes (In Chinese), Master thesis of Air
Force Engineering University,  2011.


\bibitem{Zuo2}
F. Zuo, R. Li, Weight distributation of binary optimal codes and its
application. Int. Conference on Opto-Electronics Engineering and
Information Science, 2011.

\bibitem{Mac60} MacDonald, J.E., Design methods for maximum minimum-distance error-correcting codes,
 IBM J. Res. Dev. 4, 43-57, 1960.


\bibitem{Bou2006}
I. Bouyukliev, On the binary projective codes with dimension 6, Dis.
Appl. Math., 154: 1693-1708, 2006.


\bibitem{MATLAB}
The MathWorks, MATLAB R2006a, Natick, MA, 2006.

\bibitem{Bosma}
W. Bosma, J. Cannon, and C. Playoust, The magma algebra system I:
The user language,  Journal of Symbolic Computation, vol. 24, no.
3-4, pp. 235-265, 1997.


\bibitem {lifengwei2020} F. Li, Q. Yue, Y. Wu,  Designed distances and parameters of new LCD
BCH codes over finite fields, Cryptogr. Commun. 12, 147-163, 2020.

\bibitem{Shi} S. Li, M. Shi, H. Liu, Several constructions of optimal LCD codes over small finite
fields, arXiv.2206.04936.

\end{thebibliography}
\end{document}